\newtheorem{thm}{Theorem}
\newtheorem{prop}[thm]{Proposition}
\newtheorem{lem}[thm]{Lemma}
\newtheorem{defn}[thm]{Definition}
\newtheorem{cor}[thm]{Corollary}
\newtheorem{rem}[thm]{Remark}
\newtheorem{example}[thm]{Example}
\newcommand{\pd}{\partial}
\newcommand{\re}{\mbox{Re }}
\newcommand{\im}{\mbox{Im }}
\newcommand{\C}{\mathbb C}
\newcommand{\R}{\mathbb R}
\newcommand{\E}{\mathbb E}
\newcommand{\p}{\mathbb P}
\newcommand{\B}{\mathcal B}
\newcommand{\wt}{\widetilde}
\newcommand{\wh}{\widehat}
\newcommand{\sgn}{\operatorname{sgn}}
\newcommand{\ep}{\varepsilon}
\begin{document}

\title{On the instantaneous frequency of Gaussian stochastic processes}

\author{Patrik Wahlberg and Peter J. Schreier,~\IEEEmembership{Senior Member,~IEEE}.
\thanks{P. Wahlberg is with Dipartimento di Matematica, Universit\` a di Torino, Via Carlo Alberto 10, 10123 Torino (TO), Italy. Contact data: ph +39 011 6702944, fax +39 011 6702878, email patrik.wahlberg@unito.it. P. Schreier is with the School of Electrical Engr \& Computer Science,
The University of Newcastle, Callaghan, NSW 2308, Australia.  Contact data: ph +61 2
49215997, fax +61 2 49216993, email Peter.Schreier@newcastle.edu.au.
This work was supported by the Australian Research Council (ARC)
under the Discovery Project DP0664365.}}

\maketitle

\begin{abstract}
This paper concerns the instantaneous frequency (IF) of continuous-time, zero-mean, complex-valued, proper, mean-square differentiable nonstationary Gaussian stochastic processes. We compute the probability density function for the IF for fixed time, which extends a result known for wide-sense stationary processes to nonstationary processes. For a fixed time the IF has either zero or infinite variance. For harmonizable processes we obtain as a byproduct that the mean of the IF, for fixed time, is the normalized first order frequency moment of the Wigner spectrum.
\end{abstract}

\keywords
Gaussian stochastic processes, instantaneous frequency, probability density function, Wigner spectrum.

\section{Introduction}

This paper treats the instantaneous frequency (IF) of mean square differentiable Gaussian zero-mean complex-valued proper nonstationary stochastic processes defined on $\R$. The IF of a stochastic process is the derivative of the phase function and it is a stochastic process (real-valued) itself. Our main result is an explicit formula for the probability density function (pdf) for the IF for an arbitrary fixed time point. The pdf is parameterized by the covariance function of the process and its derivatives evaluated at the same time point. The time axis may be divided into three subsets with completely different IF process variance behavior: In the first subset, the pdf has heavy tails and behaves like $x^{-3}$ for large $x$, which means that the IF variance is infinite. In the second subset, the IF has a degenerate pdf consisting of a Dirac measure at its mean (that is, the IF is deterministic). In the third subset the IF is $+\infty$ with probability one. (Alternatively one may say that the IF is not defined on the third subset.)
For wide-sense stationary (WSS) processes we show that the first subset is either $\R$ or empty, and in the latter case the covariance function has real part $\rho_x(t)=\alpha \cos(\beta t)$ for $\alpha,\beta \in \R$, $\alpha>0$. For harmonizable but not WSS processes we show through examples that the first set may be $\R$ or it may be empty,
but not ruling out other possibilities.
The formula we obtain for the pdf of the IF for fixed time is a generalization of results on the pdf of the IF for Gaussian \textit{WSS} processes derived by Miller \cite{Miller1} and Broman \cite{Broman1}.

Our result is formulated for Gaussian zero-mean complex-valued proper processes that are mean-square differentiable and have mean-square continuous derivative. When we specialize to certain harmonizable stochastic processes $z(t)$, the formula for the pdf of the IF for fixed time implies the identity
\begin{equation}\label{iffrekmoment37}
\E \frac{d}{dt} \arg z(t) = \frac{\int_{\R} \xi \wt W_z(t,\xi) d\xi }{ \int_{\R} \wt W_z(t,\xi) d\xi }
\end{equation}
where $\wt W_z$ denotes the Wigner spectrum. The Wigner spectrum is the expected value of the
Wigner distribution $W_f$, which is defined by
\begin{equation}\label{wvd1}
W_f (t,\xi) = \int_{\R} f (t+\tau/2) \overline{f(t-\tau/2)} e^{-i \tau \xi} d\tau, \quad t,\xi \in \R,
\end{equation}
for a function $f: \R \mapsto \C$.

Research about the IF has a long history in communications, signal processing and time-frequency analysis \cite{Cohen1,Cramer1,Papoulis1}. In analog frequency modulation the IF (minus a constant carrier frequency) represents the information in a modulated signal \cite{Papoulis1}. In time-frequency analysis \cite{Cohen1} the IF of a deterministic signal $f$ has been related to the Wigner distribution $W_f$ by the formula
\begin{equation}\label{momentformel0}
\frac{d}{dt} \arg f(t) = \frac{\int_{\R} \xi W_f(t,\xi) d\xi }{ \int_{\R} W_f(t,\xi) d\xi }.
\end{equation}
If a deterministic function is interpreted as a degenerate stochastic process, then this is a special case of \eqref{iffrekmoment37}.

The Wigner distribution gives a time-frequency description of $f$. It was introduced in Quantum Mechanics as a candidate for a pdf of a particle in phase space. It is well known that $W_f$ is only rarely a nonnegative function \cite{Grochenig1} so $W_f$ may not be interpreted as an energy distribution or as a pdf in general. This is consistent with the Uncertainty Principle \cite{Grochenig1} which gives upper bounds on the resolution of phase space localization of particles, or, in the signal analysis interpretation, of the time-frequency resolution of signals. Nevertheless, if the Wigner distribution $W_f$ is convolved by a sufficiently wide Gaussian function it becomes nonnegative, so domains of sufficiently large area in the time-frequency (phase) plane admits localization.
The formula \eqref{momentformel0} supports the interpretation of $W_f$ as a time-frequency distribution, since the right hand side is a normalized first-order frequency moment of the Wigner distribution, which delivers the center frequency of a narrowband signal.

The paper is organized as follows. After fixing some definitions and notation in Section \ref{prel} we introduce the framework of mean-square differentiable Gaussian proper stochastic processes in Section \ref{harmdiff}, which also contains a background for the special case of harmonizable processes. In Section \ref{ifwignerdeterm} we discuss and prove a precise version of the well-known formula \eqref{momentformel0} for the IF of a deterministic signal as a normalized first-order frequency moment of the Wigner distribution (for fixed time). In Section \ref{stochiftf} we briefly give a background on earlier work on the IF of stochastic processes and its relation to the Wigner spectrum. Then, in Section \ref{pdf}, we prove our main result, which is a formula for the pdf of the IF for fixed time and its relation to the Wigner spectrum. Finally we show by examples in Section \ref{cases} that a process may have infinite-variance IF for all time points or may have zero-variance IF for all time points.

\section{Preliminaries}\label{prel}

A probability space $(\Omega,\mathcal B,\p)$ consists of a space
$\Omega$, a $\sigma$-algebra $\mathcal B$ and a probability measure
$\p$. A random variable is a measurable function $X: \Omega \mapsto
\R^d$ (often denoted by a capital letter), and $X$
induces a probability measure on $\mathcal B(\R^d)$, the Borel
$\sigma$-algebra, defined by $P_X(A) = \p(X^{-1} (A))$, $A \in
\mathcal B(\R^d)$. If the probability measure $P_X$ is absolutely continuous with respect to Lebesgue measure we have $P_X(A) = \int_A p_X(x) dx$ where $p_X$ is the probability density function (pdf) of $X$. Sometimes, by abuse of notation, we will say that a probability measure that is a Dirac measure at a point $a \in \R^d$, denoted $\delta_a$, has pdf $\delta_a$.
We denote the expectation of a random variable $X$ by $\E X$, and by $L_0^2(\Omega)=L_0^2(\Omega,\mathcal B,\mathbb P)$ we understand the Hilbert space of second-order (finite variance) zero-mean complex-valued random variables. The space of second-order random variables with nonzero mean is denoted $L^2(\Omega)$.

Given a probability space $(\Omega,\mathcal B,\p)$, a continuous-time stochastic
process on $\R$ is defined as a family of complex-valued
$\B$-measurable functions $z_t(\omega)$ indexed by $t \in \R$. We
often suppress the variable $\omega \in \Omega$ and denote
$z(t)=z_t(\omega)$. Sometimes we write $z_t(\omega) =
z(t,\omega)$ to emphasize the fact that $z: \R \times \Omega \mapsto
\C$ is a function of two variables. For fixed $\omega$ the function
$t \mapsto z(t,\omega)$ is called a realization, trajectory or sample function.

For a space $U$, we denote by $\chi_A$ the indicator function of the set $A \subseteq U$, that is $\chi_A(x)=1$ if $x \in A$ and $\chi_A(x)=0$ if $x \notin A$.
The space of continuous functions on $\R$ is denoted $C(\R)$, the space of continuously differentiable functions on $\R$ is denoted $C^1(\R)$, and the space of continuous functions decaying at infinity is denoted $C_0(\R)$. This means that for any $\ep>0$ there is a compact set $K_\ep \subset \R$ such that $x \in K_\ep^c \Rightarrow |f(x)|<\ep$ \cite{Rudin1}, where $K_\ep^c = \R \setminus K_\ep$ denotes set complement. A derivative with respect to time is denoted $\dot{f}=df/dt$. This notation is also used for stochastic processes where we define the derivative in the mean-square sense. That is, a process $z(t)$ is mean-square differentiable at $t=t_0$ if there exists $\dot{z}(t_0) \in L^2(\Omega)$ such that
\begin{equation}\label{msder1}
\lim_{\ep \rightarrow 0} \E \left| \frac{z(t_0+\ep)-z(t_0)}{\ep} - \dot{z}(t_0)\right|^2 = 0.
\end{equation}
For a function $f(x,y)$ of two variables we write the partial derivative with respect to the first variable as $\partial_1 f(x,y) = \partial f(x,y)/\partial x$ and with respect to the second variable as $\partial_2 f(x,y) = \partial f(x,y)/\partial y$.
The normalization of the Fourier transform for functions $f\in L^1(\R)$ used in this paper is
$$
\mathscr{F} f(\xi) = \wh f(\xi) = \int_{\R} f(t) e^{-i t \xi} dx
$$
which gives the inverse Fourier transform
$$
f(t) = \mathscr{F}^{-1} \wh f(t) = \frac1{2 \pi} \int_{\R} \wh f(\xi) e^{i t \xi} d\xi.
$$
We denote by $\mathscr F L^1(\R)$ the space of functions $f$ with Fourier transform $\wh f \in L^1(\R)$.
For functions of several variables a partial Fourier transformation with respect to variable $j$ is denoted $\mathscr{F}_j$. The Wigner(--Ville) distribution \cite{Flandrin1,Folland1,Grochenig1} for $f \in L^2(\R)$ is defined and denoted by \eqref{wvd1}, where $\overline{f}$ denotes complex conjugate.

A tempered distribution $f$ belongs to the \emph{Sobolev space} of order $s \in \R$ \cite{Folland1}, denoted $f \in H^s(\R)$, provided its Fourier transform $\wh f$ is locally integrable and satisfies
\begin{equation}\nonumber
\int_{\R} (1+|\xi|^2)^{s} |\wh f(\xi)|^2 d\xi < \infty.
\end{equation}
The Sobolev scale is a smoothness scale since for $f \in H^s(\R)$ and $s$ large, a certain amount of asymptotic decay at infinity of the Fourier transform is required. This implies that $f$ will be differentiable to a degree that increases with $s$.

Finally we use the notations $\R_+ = [0,+\infty)$, $\R_- = (-\infty,0]$, the determinant of a square matrix $M$ is $\det(M)=|M|$ and the transpose of a vector $x$ is $x^T$.

\section{Mean square differentiable and harmonizable Gaussian stochastic processes}\label{harmdiff}

Let $z(t)=x(t)+i y(t)$ be a continuous-time, complex-valued, zero-mean Gaussian stochastic
process $z: \R \mapsto L_0^2(\Omega)$, not necessarily WSS.
The assumption that $z$ is Gaussian means the following \cite{Janson1}.
For any finite vector of time points $t=(t_j )_{j=1}^n$ the sampled process real-valued $2n$-vector $Z_t:=( x(t_1), \dots, x(t_n),y(t_1),\dots ,y(t_n) )^T$ has the pdf
\begin{equation}\nonumber
 p_{Z_t} (u) = (2\pi)^{-n} |M|^{-1/2} \exp\left(-\frac1{2} u^T M^{-1} u \right), \quad u \in \R^{2n},
\end{equation}
provided the covariance matrix $M=\E ( Z_t Z_t^T)$ is invertible.
A more general definition, which works also when $M$ is singular, is the requirement that the
sampled process $Z_t$ is a random variable with
characteristic function
$$
\phi_{Z_t}(u) = \E( \exp(i u^T Z_t) ) = \exp \left(-\frac1{2} u^T M u
\right), \quad u \in \R^{2n}.
$$
The covariance function of the process $z$ is denoted
$$
r_z(t,s) = \E( z(t) \overline{z(s)}), \quad t,s \in \R.
$$
We assume that $z$ is \emph{proper} \cite{Neeser1}, that is $\E( z(t) z(s) ) \equiv
0$, which implies
\begin{equation}\label{proper1}
r_x(t,s)=r_y(t,s), \quad r_{yx}(t,s)=-r_{yx}(s,t), \quad r_z(t,s) = 2 r_x(t,s) + 2 i
r_{yx}(t,s), \quad t,s \in \R,
\end{equation}
where $r_x(t,s)=\E(x(t) x(s))$, $r_y(t,s)=\E(y(t) y(s))$ and $r_{yx}(t,s)=\E(y(t) x(s))$.

We require that the process $z(t)$ be mean-square continuous and have a continuous mean-square derivative according to the following definition.

\begin{defn}\label{differentiable1}
A mean-square continuous process $z(t)$ is mean-square differentiable with continuous derivative $\dot{z}(t)$ if there exists $\dot{z}(t_0) \in L^2(\Omega)$ such that \eqref{msder1} is satisfied for all $t_0 \in \R$, and furthermore $\lim_{\ep \rightarrow 0} \E | \dot{z}(t_0+\ep) - \dot{z}(t_0) |^2 = 0$ for all $t_0 \in \R$.
\end{defn}
This definition guarantees that $\partial_1  r_z$, $\partial_1 \partial_2 r_z$ are continuous functions \cite{Cramer1,Gihman1,Loeve1} and
\begin{equation}\label{derivatakorr1}
\E ( \dot{z}(t) \overline{ z (s) } ) = \partial_1  r_z(t,s), \quad \E ( \dot{z}(t) \overline{ \dot{z}(s) } ) = \partial_1 \partial_2 r_z(t,s), \quad t,s \in \R.
\end{equation}

\begin{rem}
Note that in Definition \ref{differentiable1} we discuss a derivative process in the mean square sense only. That is, the definition does not imply that each realization is continuously differentiable with probability one. Conditions that are sufficient for continuously differentiable realizations with probability one are much more subtle and difficult \cite{Cramer1,Doob1,Miller1}.
\end{rem}

For some results we will assume that $z$ is strongly harmonizable (abbreviated in this paper as \emph{harmonizable}) \cite{Kakihara1,Loeve1,Rao1}, which
means that $r_z$ has a Fourier--Stieltjes representation
\begin{equation}\label{spektralintegral1}
r_z(t,s) = \iint_{\R^2} e^{i (t \xi - s \eta)} m_z(d\xi,d\eta),
\end{equation}
where $m_z$ is a measure of bounded variation on $\R^2$,
called the spectral measure. This assumption implies that the
process $z(t)$ has a Fourier transform representation
$$
z(t) = \int_{\R} e^{i t \xi} Z(d\xi),
$$
where the so-called spectral process $Z: \mathcal B(\R) \mapsto
L_0^2(\Omega)$ is a vector-valued measure of bounded semivariation \cite{Kakihara1}.

An important special case of harmonizable processes are the mean-square continuous WSS processes.
This means that there exists a continuous positive definite function $\rho_z$ such that $r_z(t,s)=\rho_z(t-s)$ \cite{Loeve1}. In the spectral domain the mean-square continuous WSS processes are characterized by $m_z(A,B) = \mu_z(A\cap B)$, $A,B \in \mathcal B(\R)$ for a non-negative bounded measure of one variable $\mu_z$.

For harmonizable processes the following requirement in the spectral domain is sufficient to guarantee that the process $z(t)$ is differentiable in the sense of Definition \ref{differentiable1}.
\begin{defn}\label{spekmomdef1}
A harmonizable process $z$ with spectral measure $m_z$ has spectral moments of order one if
\begin{equation}\label{spekmom1}
\iint_{\R^2} ( 1+|\xi|^2 )^{1/2} ( 1+|\eta|^2 )^{1/2} |m_z|(d\xi,d\eta) < \infty.
\end{equation}
\end{defn}

Here $|m_z|$ denotes the total variation measure of the complex measure $m_z$ \cite{Rudin1}.
Note that this definition implies that all four integrals
\begin{equation}\nonumber
\iint_{\R^2}  |m_z|(d\xi,d\eta), \quad \iint_{\R^2} |\xi| |\eta| |m_z|(d\xi,d\eta), \quad \iint_{\R^2} |\xi| |m_z|(d\xi,d\eta), \quad \iint_{\R^2} |\eta| |m_z|(d\xi,d\eta)
\end{equation}
are finite. Definition \ref{spekmomdef1} guarantees that we may take partial derivatives under the integral in \eqref{spektralintegral1} as
\begin{equation}\label{spektralderivata1}
\begin{aligned}
\partial_1 r_z(t,s) & = \iint_{\R^2} i \xi e^{i (t \xi - s \eta)} m_z(d\xi,d\eta), \\
\partial_2 r_z(t,s) & = \iint_{\R^2} (-i \eta) e^{i (t \xi - s \eta)} m_z(d\xi,d\eta), \\
\partial_1 \partial_2 r_z(t,s) & = \iint_{\R^2} \xi \eta e^{i (t \xi - s \eta)} m_z(d\xi,d\eta),
\end{aligned}
\end{equation}
due to Lebesgue's dominated convergence theorem \cite{Rudin1}. Moreover, the process satisfies Definition \ref{differentiable1} and \eqref{derivatakorr1} \cite{Loeve1}.

The Wigner distribution \eqref{wvd1} may be written $W_f=\mathscr{F}_2 (f \otimes \overline{f} \circ \kappa)$
where $\mathscr{F}_2$ denotes Fourier transformation in the second variable and $\kappa$ denotes the coordinate transformation
\begin{equation}\label{kappa}
\kappa(x,y)=(x+y/2,x-y/2) \quad \Longleftrightarrow \quad \kappa^{-1}(x,y)=((x+y)/2,x-y).
\end{equation}
The Wigner (-Ville) spectrum \cite{Flandrin1} of a harmonizable process $z(t)$ is
defined by
\begin{equation}\label{wvs1}
\wt W_z = \mathscr{F}_2 (r_z \circ \kappa).
\end{equation}
Since $r_z$ may not be an integrable function, the partial Fourier transformation in \eqref{wvs1} is in general defined with $r_z$ understood as a tempered distribution \cite{Folland1}. However, in the case when $r_z \in S_0(\R^2)$, which denotes Feichtinger's algebra \cite{Grochenig1}, we can write \eqref{wvs1} as the partial Fourier integral
\begin{equation}\label{wvs2}
\wt W_z (t,\xi) = \int_{\R} r_z (t+\tau/2,t-\tau/2) e^{-i \tau \xi} d\tau, \quad t \in \R.
\end{equation}
Under the same assumption (plus Gaussianity) we may interchange the order of integration and expectation so we also have
\begin{equation}\nonumber
\wt W_z (t,\xi) = \E\left( \int_{\R} z (t+\tau/2) \overline{z(t-\tau/2)} e^{-i \tau \xi} d\tau \right) = \E \left( W_z(t,\xi) \right), \quad t \in \R,
\end{equation}
that is, the Wigner spectrum is the expected value of the Wigner distribution of the stochastic process $z$ defined by \eqref{wvd1} \cite{Wahlberg1}.

Using \eqref{kappa} we may write the representation \eqref{spektralintegral1} as
\begin{equation}\label{spektralintegral2}
r_z \circ \kappa (t,s) = \iint_{\R^2} e^{i (s \xi + t \eta)} m_z
\circ \kappa (d\xi,d\eta)
\end{equation}
and thus by identification with \eqref{wvs1} it follows that
\begin{equation}\label{wvs2}
\wt W_z(t,d\xi) = 2 \pi \int_{\eta \in \R} e^{i t \eta} m_z \circ \kappa (d\xi,d\eta).
\end{equation}

\section{The IF and the Wigner distribution}\label{ifwignerdeterm}

The argument (or phase) of a complex number $z=x+iy \in \C \setminus \{ 0 \}$ is defined by $z=|z| e^{i \arg z}$ where we impose the restriction $-\pi < \arg z \leq \pi$. We have
\begin{equation}\label{argument1}
\arg z =
\left\{
  \begin{array}{ll}
    \arctan\left(\frac{y}{x}\right) , & x > 0,  \\
    \pi \sgn(y)+\arctan\left(\frac{y}{x}\right), & x<0, \quad y \neq 0, \\
    \frac\pi{2} \sgn(y), & x=0, \quad y \neq 0,  \\
    \pi, & x<0, \quad y=0, \\
    \mbox{undefined}, & x = y=0.
  \end{array}
\right.
\end{equation}
For $z \in U := \C \setminus \{ (-\infty,0] +i 0 \}$ we have $\arg z= \im ( \log z) = -i \log(z/|z|)$ where the principal branch of the logarithm function is understood, i.e. $-\pi<\im (\log z) < \pi$. The function $\log z$ is holomorphic from the domain $U$ onto $\R + i(-\pi,\pi) \subset \C$, and therefore $z \mapsto \arg z$ is a smooth function $U \mapsto (-\pi,\pi)$.

Let $f(t)=x(t) + i y(t)$ be a function $f: \R \mapsto \C$. If $f(t) \neq 0$ we reserve the notation 
$$
\varphi(t)=\arg f(t)
$$ 
for the phase function. For $f \in C(\R)$ the set $U_f := \{ t \in \R: f(t) \in U \} \subseteq \R$ is open and $\varphi = \arg \circ f: U_f \mapsto (-\pi,\pi)$ is continuous . If, moreover, $f \in C^1(\R)$ then it follows from above that $\varphi$ is differentiable on $U_f$, and since  $\frac{d}{dt} \arctan (t)=1/(1+t^2)$ the derivative is
\begin{equation}\label{fasderivata1}
\dot{\varphi}(t) = \frac{d}{dt} \arg f(t) = \frac{1}{1+\frac{y^2(t)}{x^2(t)}} \cdot \frac{x(t) \dot{y}(t) - \dot{x}(t) y(t)}{x^2(t)} = \frac{x(t) \dot{y}(t) - \dot{x}(t) y(t)}{x^2(t) + y^2(t)}
\end{equation}
for $t \in U_f$. In fact, for $\{ t \in \R: x(t)>0 \}$ and $\{ t: x(t)<0, \ y(t) \neq 0 \}$ this follows from \eqref{argument1}.
In the remaining case $\{ t \in \R: x(t)=0, \ y(t) \neq 0 \}$ we may use the following modified definition, equivalent to \eqref{argument1} in $\{ t \in \R: y(t) \neq 0 \}$,
$$
\arg z = \frac\pi{2} \sgn(y) - \arctan\left(\frac{x}{y}\right),
$$
which gives $\dot{\varphi}(t)=-\dot{x}(t)/y(t)$ for $\{ t \in \R: x(t)=0, \ y(t) \neq 0 \}$. This also leads to the expression on the right hand side of \eqref{fasderivata1}.

Clearly the formula \eqref{fasderivata1} can be extended from the domain $U_f$ to $\{ t: f(t) \neq 0 \}$ and $\dot{\varphi}$ is still continuous $\{ t: f(t) \neq 0 \} \mapsto \R$.
The \emph{instantaneous frequency} (IF) \cite{Cohen1,Cramer1,Flandrin1} of $f(t)$ is defined by \eqref{fasderivata1} as the derivative $\dot{\varphi}(t)$ with domain $\{ t: f(t) \neq 0 \}$. For an exponential function $e^{i \xi t}$ with frequency $\xi$ the IF is thus $\xi$ constantly, which means that the term instantaneous frequency is an extension of the concept of a constant (global) frequency.
For $\{ t: f(t)=0 \}$ it will turn out to be convenient to \emph{define} (by abuse of notation) $\dot{\varphi}(t)=+\infty$ so in summary we have for $f \in C^1(\R)$
\begin{equation}\label{if1}
\dot{\varphi}(t) =
\left\{
  \begin{array}{ll}
    \frac{x(t) \dot{y}(t) - \dot{x}(t) y(t)}{x^2(t) + y^2(t)} & \quad \mbox{if} \quad x^2(t) + y^2(t) > 0,  \\
    +\infty & \quad \mbox{if} \quad x^2(t) +y^2(t) =0.
  \end{array}
\right.
\end{equation}

If $f$ is real-valued and continuous then $f(t) \neq 0$ implies that $f$ has constant sign, that is $\varphi(t)=0$ or $\varphi(t)=\pi$, in a neighborhood of $t$, and hence $\dot{\varphi}(t)$ is well-defined and equals zero in this neighborhood of $t$. The derivative of the phase function of real-valued signals is thus not very interesting. However, the Hilbert transform \cite{Cramer1,Papoulis1} of a real-valued signal gives rise to a complex-valued signal with a nonzero IF. Since this transformation from a real-valued signal to a so-called \emph{analytic} signal transforms $\cos(\xi t)$ into $e^{i \xi t}$, it gives a natural definition of the IF of a real-valued signal, commonly used in the literature.

There is a connection between the IF of a sufficiently smooth and decaying function and the Wigner distribution \cite{Cohen1,Flandrin1}. The heuristic version of this result is well known \cite{Cohen1} but here we prove a more precise statement.

\begin{prop}\label{ifwignerprop1}
Suppose $\ep>0$, $f \in H^{\frac{3}{2}+\ep}(\R)$, $f(t)=x(t) + i y(t)$ and $\dot{\varphi}(t)$ is defined by \eqref{if1}. Then for any $t \in \R$ such that $f(t) \neq 0$ we have
\begin{equation}\label{ifwigner1}
\dot{\varphi}(t) = \frac{\int_{\R} \xi W_f(t,\xi) d\xi }{ \int_{\R} W_f(t,\xi) d\xi }.
\end{equation}
\end{prop}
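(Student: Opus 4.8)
The plan is to recognize that, for fixed $t$, the function $\xi\mapsto W_f(t,\xi)$ is precisely the Fourier transform (in the paper's convention) of the one-variable function $g_t(\tau):=f(t+\tau/2)\overline{f(t-\tau/2)}$, so that $W_f(t,\cdot)=\wh{g_t}$. Both the numerator and the denominator of \eqref{ifwigner1} can then be read off from $\wh{g_t}$ via the Fourier inversion formula evaluated at $\tau=0$, and the whole identity reduces to two short computations of $g_t(0)$ and $g_t'(0)$.

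First I would evaluate the denominator. Since $\wh{g_t}\in L^1(\R)$ (justified below) and $g_t$ is continuous, inversion at $\tau=0$ gives $\int_\R W_f(t,\xi)\,d\xi=\int_\R\wh{g_t}(\xi)\,d\xi=2\pi g_t(0)=2\pi|f(t)|^2=2\pi(x^2(t)+y^2(t))$, which is nonzero exactly when $f(t)\neq 0$. For the numerator I would use that $\xi\wh{g_t}(\xi)=-i\,\wh{g_t'}(\xi)$ in this convention, so that $\int_\R\xi W_f(t,\xi)\,d\xi=-i\int_\R\wh{g_t'}(\xi)\,d\xi=-2\pi i\,g_t'(0)$, again by inversion at $\tau=0$. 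Differentiating the product $g_t$ and evaluating at $\tau=0$ yields $g_t'(0)=\tfrac12\bigl(f'(t)\overline{f(t)}-f(t)\overline{f'(t)}\bigr)=i\,\im\bigl(f'(t)\overline{f(t)}\bigr)=i\bigl(x(t)\dot y(t)-\dot x(t)y(t)\bigr)$. Substituting, the numerator equals $2\pi\bigl(x(t)\dot y(t)-\dot x(t)y(t)\bigr)$, and dividing by the denominator reproduces exactly the expression \eqref{fasderivata1}, i.e. $\dot\varphi(t)$.

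The analytic content, and the main obstacle, is to justify the two inversion steps, which requires $\wh{g_t}\in L^1(\R)$ and $\xi\wh{g_t}\in L^1(\R)$ — that is, $(1+|\xi|)\wh{g_t}\in L^1(\R)$ — and this is exactly where the hypothesis $f\in H^{3/2+\ep}(\R)$ enters. I would first invoke the Sobolev embedding $H^{3/2+\ep}(\R)\hookrightarrow C^1(\R)$ (valid since $3/2+\ep>1+1/2$) so that $x,y$ are differentiable and the pointwise quantities in \eqref{if1} and in $g_t'(0)$ are meaningful. Next, since the dilation, reflection and conjugation $\tau\mapsto f(t\pm\tau/2)$ preserve $H^{3/2+\ep}$, and since $H^s(\R)$ is a Banach algebra under pointwise multiplication for $s>1/2$, the product $g_t$ lies in $H^{3/2+\ep}(\R)$. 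Finally, writing $(1+|\xi|)\wh{g_t}=\bigl[(1+|\xi|)(1+|\xi|^2)^{-(3/2+\ep)/2}\bigr]\cdot\bigl[(1+|\xi|^2)^{(3/2+\ep)/2}\wh{g_t}\bigr]$ and applying Cauchy--Schwarz, the second factor is in $L^2$ by the definition of $H^{3/2+\ep}$, while the first is in $L^2$ precisely because $2-2(3/2+\ep)=-1-2\ep<-1$ makes $\int_\R(1+|\xi|)^2(1+|\xi|^2)^{-(3/2+\ep)}\,d\xi$ converge. Hence $(1+|\xi|)\wh{g_t}\in L^1(\R)$, both integrals converge absolutely, the identity $\xi\wh{g_t}=-i\wh{g_t'}$ holds, and every formal manipulation above is legitimate.
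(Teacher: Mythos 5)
Your proposal is correct and follows essentially the same route as the paper: identify $W_f(t,\cdot)$ as the Fourier transform of $g_t(\tau)=f(t+\tau/2)\overline{f(t-\tau/2)}$, apply Fourier inversion at $\tau=0$ for the denominator, use $\xi\wh{g_t}=-i\wh{g_t'}$ (the paper's integration by parts) for the numerator, and evaluate $g_t'(0)$. The only difference is cosmetic: where you invoke the Banach algebra property of $H^s(\R)$ for $s>1/2$ together with a weighted Cauchy--Schwarz estimate to get $(1+|\xi|)\wh{g_t}\in L^1(\R)$, the paper reaches the same integrability conclusions more elementarily by showing $(1+|\xi|^2)^{1/2}\wh f\in L^1(\R)$ and writing $\wh{g_t}$ and $\wh{g_1}$ as convolutions of $L^1$ functions.
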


\begin{proof}
The assumption is
\begin{equation}\label{sobolev1}
\int_{\R} (1+|\xi|^2)^{\frac{3}{2}+\ep} |\wh f(\xi)|^2 d\xi < \infty.
\end{equation}
Let $t \in \R$ be fixed and arbitrary such that $f(t) \neq 0$.
Define the function $g(\tau) := f(t+ \tau/2) \overline{f(t-\tau/2)}$. The requirement \eqref{sobolev1} implies that $\wh f \in L^2(\R)$ which means that $f \in L^2(\R)$ by Plancherel's theorem. Thus by the Cauchy--Schwarz inequality we have
\begin{equation}\label{finvsuff1}
g = f(t+ \cdot/2) \overline{f(t-\cdot/2)} \in L^1(\R).
\end{equation}
The Cauchy--Schwarz inequality and \eqref{sobolev1} give
\begin{equation}\label{sobolev2}
\begin{aligned}
\int_{\R} (1+|\xi|^2)^{1/2} |\wh f(\xi)| d\xi & = \int_{\R} (1+|\xi|^2)^{\frac1{2} + \frac1{4} + \frac{\ep}{2} - \frac1{4} - \frac{\ep}{2}} |\wh f(\xi)| d\xi  \\
& \leq \left( \int_{\R} (1+|\xi|^2)^{\frac3{2}+\ep} |\wh f(\xi)|^2 d\xi \right)^{1/2} \left( \int_{\R} (1+|\xi|^2)^{-\frac{1}{2} - \ep } d \xi \right)^{1/2} < \infty.
\end{aligned}
\end{equation}
Hence $\wh f \in L^1(\R)$ and since $\mathscr F (f(t+ \cdot/2))(\xi) = 2 e^{i 2 \xi t} \wh f(2 \xi)$ we have $\mathscr F (f(t+ \cdot/2)) \in L^1(\R)$. Thus $\mathscr F (\overline{f(t- \cdot/2)})=\overline{\mathscr F (f(t+ \cdot/2))}$ implies that
\begin{equation}\label{finvsuff2}
\mathscr F g = \mathscr F (f(t+ \cdot/2) \overline{f(t-\cdot/2)}) = \frac1{2\pi} \mathscr F (f(t+ \cdot/2)) * \overline{\mathscr F (f(t+ \cdot/2))} \in L^1(\R).
\end{equation}
Moreover, $\wh f \in L^1(\R)$ implies that $f \in C_0(\R)$ according to the Riemann--Lebesgue lemma. Also, since $\mathscr F \dot{f}(\xi)=i \xi \wh f(\xi)$ (where $f$ is considered a tempered distribution), and $\xi \wh f(\xi) \in L^1(\R)$ according to \eqref{sobolev2}, we have $\dot{f} \in C_0(\R)$ again by the Riemann--Lebesgue lemma. Thus by \eqref{finvsuff1} and \eqref{finvsuff2} we have $g \in L^1(\R) \cap \mathscr F L^1(\R) \cap C_0(\R)$, which means that Fourier's inversion formula holds pointwise \cite{Rudin1} for the function $g$: We have $g(\tau) = (\mathscr F^{-1} \wh g) (\tau)$ for all $\tau \in \R$. In particular
\begin{equation}\label{namnare1}
2\pi |f(t)|^2 = 2 \pi g(0) = \int_\R \wh g(\xi) d\xi = \int_\R W_f(t,\xi) d \xi,
\end{equation}
where the last equality uses the definition \eqref{wvd1}.

Concerning the numerator in \eqref{ifwigner1} we use integration by parts and the just proven fact that $f$ vanishes at infinity to obtain
\begin{equation}\label{intbypart1}
\begin{aligned}
\xi W_f(t,\xi) & = \int_{\R} f(t+\tau/2) \overline{f(t-\tau/2)} i \frac{d}{d \tau} (e^{-i \tau \xi}) d \tau \\
& = -i \int_{\R} e^{-i \tau \xi} \frac{d}{d \tau} \left( f(t+\tau/2) \overline{f(t-\tau/2)} \right) d \tau.
\end{aligned}
\end{equation}
Let us study the function $g_1(\tau)= \dot{f}(t+\tau/2) \overline{f(t-\tau/2)}$ and
$$
h(\tau) = \frac{d}{d \tau} \left( f(t+\tau/2) \overline{f(t-\tau/2)} \right)
= \frac{1}{2} (g_1(\tau)-\overline{g_1(-\tau)}).
$$
The assumption \eqref{sobolev1} implies that $\xi \wh f(\xi) \in L^2(\R)$ and thus $\mathscr F \dot{f} \in L^2(\R)$, and $\dot{f} \in L^2(\R)$ by Plancherel's theorem. Since we already know that $f \in L^2(\R)$ the Cauchy--Schwarz inequality gives $g_1 \in L^1(\R)$.

 From above we know that $\mathscr F (\overline{f(t- \cdot/2)})=\overline{\mathscr F (f(t+ \cdot/2))} = 2 e^{- i 2 \xi t} \overline{\wh f(2 \xi)} \in L^1(\R)$. Likewise we have $\mathscr F(\dot{f}(t+ \cdot/2))(\xi) = 2 e^{i 2 \xi t} \widehat{\dot{f}}(2 \xi) = 2 i e^{i 2 \xi t} 2 \xi \wh f(2 \xi)$. This function belongs to $L^1(\R)$ because of \eqref{sobolev2}. Thus
$$
\wh g_1 = \frac1{2\pi} \mathscr F (\dot{f}(t + \cdot/2)) * \mathscr F (\overline{f(t- \cdot/2)}) \in L^1(\R).
$$
Hence we have proved that $g_1 \in C_0(\R) \cap L^1(\R) \cap \mathscr F L^1(\R)$ which means that Fourier's inversion formula holds for $g_1$ and hence also for $h$. Denoting $f(t)=x(t)+i y(t)$, integration of \eqref{intbypart1} gives
\begin{equation}\label{taljare1}
\begin{aligned}
\int_\R \xi W_f(t,\xi) d \xi & = - i \int_\R \mathscr F h(\xi) d \xi = - 2 \pi i h(0) = - \pi i \left( \dot{f}(t) \overline{f(t)} - f(t) \overline{\dot{f}(t)} \right) \\
& = 2 \pi ( x(t) \dot{y}(t) - \dot{x}(t) y(t) ).
\end{aligned}
\end{equation}
Finally \eqref{ifwigner1} follows from a combination of \eqref{if1}, \eqref{namnare1} and \eqref{taljare1}.
\end{proof}

\section{The IF for stochastic processes and the Wigner spectrum}\label{stochiftf}

One of our goals in this paper is to generalize Proposition \ref{ifwignerprop1} from deterministic functions to certain stochastic processes defined on $\R$, denoted $z(t)=x(t)+i y(t) = |z(t)| e^{i \varphi(t)}$. More precisely we would like to prove the formula
\begin{equation}\label{vvformel1}
\E(\dot{\varphi}(t)) = \frac{\int_{\xi \in \R} \xi \wt W_z(t,d\xi) }{ \int_{\xi \in \R} \wt W_z(t,d\xi)}.
\end{equation}
In fact, we will compute the pdf of the random variable $\dot{\varphi}(t)$, for fixed $t \in \R$, and then as a consequence derive the formula above.
This problem has been studied by Miller \cite{Miller1} and Broman \cite{Broman1} for WSS Gaussian proper stochastic processes. Miller and Broman independently derived the probability density function for $\dot{\varphi}(t)$ for fixed $t$, using either the hypothesis that the process is proper (in \cite{Miller1}) or the more restrictive hypothesis that the signal is analytic \cite{Papoulis1} (in \cite{Broman1}). (See also \cite{Strom1}.) Our aim is to generalize their results from WSS to Gaussian proper \emph{nonstationary} processes that have mean-square continuous derivative. As a special case we will study certain harmonizable processes and prove the identity \eqref{vvformel1} for them.

For processes more general than WSS it has been customary, in parts of the literature, to define the instantaneous frequency as \cite{Flandrin1,Martin1}
\begin{equation}\label{fuskdefinition1}
\psi(t) := \frac{x(t) \dot{y}(t) - \dot{x}(t) y(t)}{\E |z(t)|^2}, \quad \E |z(t)|^2 > 0.
\end{equation}
This definition replaces the random variable in the denominator of \eqref{if1} by its expected value.
It gives a generalization of the formula \eqref{ifwigner1} of Proposition \ref{ifwignerprop1}, with the IF $\dot{\varphi}(t)$ replaced by $\E(\psi(t))$ and $W_f$ replaced by $\wt W_f$, as follows.

\begin{prop}\label{ifwignerprop2}
Suppose $z(t)$ is a proper harmonizable process such that the spectral measure satisfies \eqref{spekmom1}.
Then for any $t \in \R$ such that $\E |z(t)|^2 > 0$ we have
\begin{equation}\label{ifwigner2}
\E(\psi(t)) = \frac{\int_{\xi \in \R} \xi \wt W_z(t,d\xi) }{ \int_{\xi \in \R} \wt W_z(t,d\xi)}.
\end{equation}
\end{prop}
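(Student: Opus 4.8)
The plan is to evaluate the numerator and the denominator of the right-hand side of \eqref{ifwigner2} separately, reduce each to the covariance function $r_z$ and its first-order derivatives evaluated on the diagonal $s=t$, and then match the outcome against $\E(\psi(t))$ read off from \eqref{fuskdefinition1}. The first-order spectral moment hypothesis \eqref{spekmom1} is exactly what makes every integral below absolutely convergent and what licenses the differentiations \eqref{spektralderivata1} under the integral sign, so I would invoke it at each place where a moment of the Wigner spectrum or an interchange of limit and integral is used.

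First I would dispose of the denominator. Using the spectral-domain form of the Wigner spectrum, $\wt W_z(t,d\xi)=2\pi\int_{\eta\in\R} e^{it\eta}\,m_z\circ\kappa(d\xi,d\eta)$, and integrating out $\xi$, the total mass equals $2\pi\iint_{\R^2} e^{it\eta}\,m_z\circ\kappa(d\xi,d\eta)$. By the representation \eqref{spektralintegral2} evaluated at $s=0$ this double integral is precisely $r_z\circ\kappa(t,0)=r_z(t,t)=\E|z(t)|^2$, so $\int_\R \wt W_z(t,d\xi)=2\pi\,\E|z(t)|^2$, which is strictly positive by hypothesis and hence makes the quotient well defined.

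For the numerator I would insert the factor $\xi$ and transport it through the coordinate change $\kappa$. Since $m_z\circ\kappa$ is the image of $m_z$ under $\kappa^{-1}$, the variable $\xi$ in the $m_z\circ\kappa$ integral corresponds to $(\xi+\eta)/2$ in the $m_z$ integral while $e^{it\eta}$ becomes $e^{it(\xi-\eta)}$, giving $\int_\R \xi\,\wt W_z(t,d\xi)=\pi\iint_{\R^2}(\xi+\eta)e^{it(\xi-\eta)}\,m_z(d\xi,d\eta)$. Comparing this with the first two lines of \eqref{spektralderivata1} at $s=t$ identifies the integral with $\tfrac1i\bigl(\partial_1 r_z(t,t)-\partial_2 r_z(t,t)\bigr)$, so that $\int_\R \xi\,\wt W_z(t,d\xi)=-\pi i\,\bigl(\partial_1 r_z(t,t)-\partial_2 r_z(t,t)\bigr)$. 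The Hermitian symmetry $r_z(s,t)=\overline{r_z(t,s)}$ yields $\partial_2 r_z(t,t)=\overline{\partial_1 r_z(t,t)}$, whence the bracket equals $2i\,\im \partial_1 r_z(t,t)$ and the numerator collapses to $2\pi\,\im \partial_1 r_z(t,t)$.

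It then remains to identify $\im\partial_1 r_z(t,t)$ with the expected value of the numerator of $\psi$. By \eqref{derivatakorr1} we have $\partial_1 r_z(t,t)=\E(\dot z(t)\overline{z(t)})$, and the elementary identity $\im(\dot z\overline z)=x\dot y-\dot x y$ together with linearity of the expectation gives $\im\partial_1 r_z(t,t)=\E(x(t)\dot y(t)-\dot x(t)y(t))$, the last expectation being finite by Cauchy--Schwarz because $z(t),\dot z(t)\in L^2(\Omega)$. Dividing the numerator $2\pi\,\E(x\dot y-\dot x y)$ by the denominator $2\pi\,\E|z(t)|^2$ and comparing with $\E(\psi(t))$ obtained from \eqref{fuskdefinition1} finishes the argument. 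The only real work is the bookkeeping that keeps all the integrals finite; once \eqref{spekmom1} is in force the identity is a direct computation, and I expect the mildly error-prone step to be tracking the factor $\tfrac12$ and the sign through the change of variables $\kappa$. I would note in passing that only the moment bound and the Hermitian symmetry of $r_z$ are actually used here, properness being a standing assumption of the surrounding framework rather than an ingredient of this particular identity.
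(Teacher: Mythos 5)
Your proof is correct and follows essentially the same route as the paper's: both evaluate the numerator and denominator separately via the spectral representation \eqref{spektralintegral2}--\eqref{wvs2} and the differentiated formulas \eqref{spektralderivata1}, the only difference being that you reach $\E(x(t)\dot y(t)-\dot x(t)y(t))=\im \partial_1 r_z(t,t)$ through the Hermitian symmetry of $r_z$ where the paper routes through $\partial_1 r_{yx}-\partial_2 r_{yx}$ and the properness identities \eqref{proper1}. Your closing observation is accurate: properness is not actually needed for this identity, only the moment bound \eqref{spekmom1} and $r_z(s,t)=\overline{r_z(t,s)}$.
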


\begin{proof}
First we note that \eqref{spektralintegral1}, \eqref{kappa} and \eqref{wvs2} give
\begin{equation}\label{namnare2}
\E|z(t)|^2 = r_z(t,t) = \iint_{\R^2} e^{i t (\xi-\eta)} m_z(d \xi,d\eta) = \iint_{\R^2} e^{i t \eta} m_z \circ \kappa (d \xi,d\eta) = \frac1{2 \pi} \int_{\xi \in \R} \wt W_z(t,d \xi).
\end{equation}
Next, since $\E(x(t) \dot{y}(t)) = \partial_1 r_{yx}(t,t)$ and $\E(\dot{x}(t) y(t)) = \partial_2 r_{yx}(t,t)$, we obtain using \eqref{proper1}, \eqref{spektralderivata1} and \eqref{wvs2}
\begin{equation}\label{taljare2}
\begin{aligned}
\E \left( x(t) \dot{y}(t) - \dot{x}(t) y(t) \right) & = \partial_1 r_{yx}(t,t) - \partial_2 r_{yx}(t,t) \\
& = \frac1{2i} \left( \partial_1 r_z (t,t) - \partial_2 r_z(t,t)  \right) \\
& = \frac1{2i} \left( \iint_{\R^2} i (\xi + \eta) e^{i t (\xi-\eta)} m_z (d\xi,d\eta) \right) \\
& = \iint_{\R^2} \xi e^{i t \eta} m_z \circ \kappa (d\xi,d\eta) \\
& = \frac1{2 \pi} \int_{\xi \in \R} \xi \wt W_z (t, d \xi).
\end{aligned}
\end{equation}
Finally \eqref{ifwigner2} follows from a combination of \eqref{fuskdefinition1}, \eqref{namnare2} and \eqref{taljare2}.
\end{proof}

The definition \eqref{fuskdefinition1} is however not the natural definition of the IF stochastic process. Instead we define the IF by \eqref{if1} where $z(t)=x(t)+i y(t) = |z(t)| e^{i \varphi(t)}$ is a zero-mean, complex-valued, proper, Gaussian process that is differentiable in the sense of Definition \ref{differentiable1}. Thus $\dot{\varphi}(t)$ is a $\R \cup \{ +\infty \}$-valued stochastic process defined on $\R$.

\begin{rem}\label{singlefunction1}
A trivial case of a second-order zero-mean stochastic process consists of a random variable $X \in L_0^2(\Omega)$ times a function $f \in C^1(\R)$, that is $z(t)=X f(t)$. For such a process we have $z(t,\omega)=|X(\omega)| |f(t)| e^{i (\arg X(\omega) + \varphi(t))}$, where $f(t)=|f(t)| e^{i \varphi(t)}$, provided $f(t) \neq 0$ and $X(\omega) \neq 0$. It follows that the IF of $z(t,\omega)$ is $\dot{\varphi}(t)$ provided that $f(t) \neq 0$ and $X(\omega) \neq 0$.
That is, the IF of $z(t)$ is $\dot{\varphi}(t) \chi_{\{\omega: X(\omega) \neq 0\}}(\omega) + \infty \chi_{\{\omega: X(\omega) = 0\}}(\omega)$. This means that the IF is essentially the deterministic IF of $f$ and is only stochastic in the sense that for certain $\omega \in \Omega$ it is $+\infty$ for all $t \in \R$, and for the remaining $\omega$ it does not depend on $\omega$. This is true also when $X$ is non-Gaussian and improper, that is $\E X^2 \neq 0$.
\end{rem}

\section{The probability density function for the stochastic IF for fixed time}\label{pdf}

In this section we will derive the pdf of the IF stochastic process $\dot{\varphi}(t)$, defined by \eqref{if1} for a mean-square differentiable stochastic process $z(t) = x(t)+i y(t) = |z(t)| e^{i \varphi(t)}$, for a fixed arbitrary $t \in \R$. As a consequence we will obtain the formula \eqref{vvformel1} for $\E (\dot{\varphi}(t) )$ provided the process is harmonizable.

Let $z(t)=x(t)+i y(t)$ be a proper Gaussian stochastic process which is differentiable according to Definition \ref{differentiable1}.
Fix $t \in \R$ and define the random column four-vector $X=(x(t), \dot{y}(t), y(t),
\dot{x}(t))^T$. Then $X$ is zero-mean and Gaussian, since Gaussianity is
preserved under mean square limits \cite{Janson1}. Because $z(t)$ is proper,
\eqref{proper1} holds, and thus $\E x(t) y(t)= \E \dot{y}(t) \dot{x}(t) = \pd_1
\pd_2 r_{yx}(t,t)=0$ and $\E \dot{y}(t) y(t) = \pd_1 r_y(t,t)=\pd_1
r_x(t,t)=\E x(t) \dot{x}(t)$. Hence $X$ has covariance matrix $M=\E X X^T \in \R^{4 \times 4}$ with the structure
\begin{eqnarray}\label{Mdef}
M =\left(
  \begin{array}{rrrr}
    a & b & 0 & c \\
    b & d & c & 0 \\
    0 & c & a & -b \\
    c & 0 & -b & d
  \end{array}
  \right),
\end{eqnarray}
and parameter values
\begin{equation}\label{parametrar1}
\begin{aligned}
a & =\E x(t)^2=r_x(t,t), \quad b=\E x(t)
\dot{y}(t) = \pd_1 r_{yx}(t,t), \\
c & = \E \dot{y}(t) y(t) = \pd_1 r_{x}(t,t), \quad d=\E \dot{y}(t) \dot{y}(t) = \pd_1 \pd_2 r_{x}(t,t).
\end{aligned}
\end{equation}
We have $|M| = (a d -c^2 - b^2)^2$ and
\begin{equation}\label{parameterolikhet1}
a d -c^2 - b^2 \geq 0,
\end{equation}
which follows from taking the determinant of the upper left $3
\times 3$ submatrix of the nonnegative definite matrix $M$.

Our main technical result concerns real Gaussian zero-mean four-vectors with covariance matrix \eqref{Mdef}. In the proof we will need a small lemma that may be considered ``folkloristic'' in probability theory and not necessary to prove. Nevertheless we prove it for completeness and clarity.

\begin{lem}\label{nollmlem1}
Let $X: \Omega \mapsto \R^d$ be a random variable. Suppose that $N
\in \mathcal B(\R^d)$, $x_0 \in \R^d$ are two disjoint null sets for
$P_X$, i.e. $P_X(N)=P_X(\{ x_0 \})=0$, $x_0 \notin N$. If we define
$\wt X: \Omega \mapsto \R^d \setminus N$ by
\begin{equation}\label{xtilde1}
\wt X(\omega) = \left\{ \begin{array}{ll}
             X(\omega), & \omega \in \Omega \setminus X^{-1}(N),  \\
             x_0, & \omega \in X^{-1}(N),
           \end{array} \right.
\end{equation}
then $P_{\wt X}(A) = P_X(A)$ for all $A \in \B(\R^d)$.
\end{lem}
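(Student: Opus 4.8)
The statement is a routine measure-theoretic verification whose whole content is that $\wt X$ differs from $X$ only on the event $X^{-1}(N)$, and that this event is $\p$-null because $\p(X^{-1}(N)) = P_X(N) = 0$. The plan is therefore to compute $P_{\wt X}(A) = \p(\wt X^{-1}(A))$ directly for an arbitrary Borel set $A \in \B(\R^d)$ and to show that it coincides with $P_X(A)$.

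First I would check that $\wt X$ is measurable, so that $P_{\wt X}$ is well defined. Fix $A \in \B(\R^d)$ and split $\Omega$ along the measurable set $X^{-1}(N)$ and its complement. On $\Omega \setminus X^{-1}(N)$ we have $\wt X = X$, so here $\{\wt X \in A\} = X^{-1}(A) \setminus X^{-1}(N) = X^{-1}(A \setminus N)$. On $X^{-1}(N)$ we have $\wt X \equiv x_0$, so this part of $\wt X^{-1}(A)$ equals $X^{-1}(N)$ when $x_0 \in A$ and $\emptyset$ when $x_0 \notin A$. Since both pieces are preimages under $X$ of Borel sets, $\wt X^{-1}(A) \in \B$, and hence $\wt X$ is a random variable.

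Next, using that the two pieces are disjoint together with additivity of $\p$, I would write
$$
P_{\wt X}(A) = \p\big(X^{-1}(A \setminus N)\big) + \p\big(X^{-1}(N)\big)\, \chi_A(x_0) = P_X(A \setminus N) + P_X(N)\, \chi_A(x_0).
$$
The hypothesis $P_X(N) = 0$ now annihilates the second term. For the first term, monotonicity gives $P_X(A \cap N) \leq P_X(N) = 0$, so $P_X(A \setminus N) = P_X(A) - P_X(A \cap N) = P_X(A)$. Combining these identities yields $P_{\wt X}(A) = P_X(A)$ for every $A \in \B(\R^d)$, as claimed.

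There is no genuine obstacle in this argument; the only points requiring a little care are the disjoint decomposition of $\wt X^{-1}(A)$ and the role of the assumption $x_0 \notin N$, which guarantees both that $\wt X$ indeed takes values in $\R^d \setminus N$ (so the codomain stated in the lemma is correct) and that the event $\{X = x_0\}$ is disjoint from $X^{-1}(N)$. It is worth remarking that only $P_X(N) = 0$ is actually used in the computation; the additional hypothesis $P_X(\{x_0\}) = 0$ is not needed for the conclusion and serves only to situate the lemma in the context of its later application.
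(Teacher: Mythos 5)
Your proof is correct, and it takes a genuinely different (and slightly leaner) route than the paper's. The paper decomposes the \emph{target} Borel set in the range space, writing $A = (A \cap N) \cup \bigl((A \cap \{x_0\}) \setminus N\bigr) \cup \bigl(A \setminus (N \cup \{x_0\})\bigr)$, shows that $P_{\wt X}$ and $P_X$ each vanish on the first two pieces and agree on the third, and in doing so it genuinely invokes the hypothesis $P_X(\{x_0\})=0$ (to kill the middle piece for $P_{\wt X}$, via $P_{\wt X}(\{x_0\}) \leq P_X(N)+P_X(\{x_0\})$). You instead decompose the \emph{domain} $\Omega$ along the event $X^{-1}(N)$, obtaining the clean identity $P_{\wt X}(A) = P_X(A \setminus N) + P_X(N)\,\chi_A(x_0)$, from which everything follows using only $P_X(N)=0$. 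Your version makes transparent the underlying fact that $\wt X = X$ almost surely, and your closing remark is right: the hypothesis $P_X(\{x_0\})=0$ is not needed for the conclusion, even though the paper's own argument happens to lean on it. Your approach buys a marginally stronger statement and a shorter computation; the paper's buys nothing extra here, though its range-space bookkeeping mirrors the style of the change-of-variables manipulations it performs later in Proposition \ref{grundprop1}. One small point of care you handled correctly: checking measurability of $\wt X$ before speaking of $P_{\wt X}$, which the paper leaves implicit.
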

\begin{proof}
Let $A \in \B(\R^d)$. We write $A = A_1 \cup A_2 \cup A_3$ as a
pairwise disjoint union with $A_1 = A \cap N$, $A_2 = (A \cap \{x_0
\}) \setminus N$ and $A_3 = A \setminus (N \cup \{ x_0 \})$. Since
$\wt X$ takes values in $\R^d  \setminus N$ we have $P_{\wt X}(A_1)
\leq P_{\wt X}(N)=0$, and moreover $P_{\wt X}(A_2) \leq P_{\wt X}
(\{ x_0 \}) = \p(X^{-1}(N) \cup X^{-1}(x_0)) \leq P_X(N)+P_X(\{ x_0
\}) = 0$. Thus $P_{\wt X} (A) = P_{\wt X}(A_1) + P_{\wt X}(A_2) +
P_{\wt X}(A_3) = P_{\wt X}(A \setminus (N \cup \{x_0 \}) ) = P_X(A
\setminus (N \cup \{x_0 \}) )$, where the final equality follows
from $\wt X(\omega) \in A \setminus (N \cup \{x_0 \})
\Leftrightarrow \omega \in X^{-1}(A) \setminus (X^{-1}(N) \cup
X^{-1}(x_0)) \Leftrightarrow X(\omega) \in A \setminus (N \cup \{x_0
\})$. This finally gives $P_X(A) = P_X(A \cap(N \cup \{ x_0 \}) ) +
P_X(A \setminus (N \cup \{x_0\}) ) = P_X(A \setminus (N \cup
\{x_0\}) ) = P_{\wt X}(A)$ because $P_X(A \cap(N \cup \{ x_0 \}) )
\leq P_X(N) + P_X(\{ x_0 \}) = 0$.
\end{proof}

Next follows the crucial result about Gaussian four-vectors with covariance matrix $M$ according to \eqref{Mdef}.

\begin{prop}\label{grundprop1}
Let $X=(X_1,X_2,X_3,X_4)$ be a zero-mean Gaussian real-valued vector
with covariance matrix $M$ defined in \eqref{Mdef}. Define the $\R \cup \{ +\infty \}$-valued random variable
\begin{equation}\label{ifvar1}
Y = \left\{ \begin{array}{ll}
             \frac{X_1 X_2 - X_3 X_4}{X_1^2 + X_3^2}, & \mbox{if} \quad X_1^2 + X_3^2 > 0,  \\
             +\infty & \mbox{if} \quad X_1^2 + X_3^2 = 0.
           \end{array} \right.
\end{equation}
If $a d -c^2 - b^2>0$ then $Y$ has probability density function
\begin{equation}\label{pdf37}
p_Y(y) = \frac{a}{2} |M|^{1/2} \left(  (a y -b)^2 + |M|^{1/2}
\right)^{-3/2}.
\end{equation}
Consequently $Y$ has infinite variance and mean
\begin{equation}
\E Y = \frac{b}{a}.
\end{equation}
If $a d -c^2 - b^2=0$ we have the two subcases:

(i) If $a>0$ then $Y=b/a$ a.s.

(ii) If $a=0$ then $Y = + \infty$ a.s.

\end{prop}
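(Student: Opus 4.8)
The plan is to exploit the block structure of $M$ by conditioning on the pair $(X_1,X_3)$ that constitutes the denominator of $Y$. Write $D := ad - c^2 - b^2 = |M|^{1/2} \ge 0$, and treat first the nondegenerate case $D>0$ (which forces $a>0$, since $a=0$ would give $D=-b^2-c^2\le 0$). From \eqref{Mdef} the pair $(X_1,X_3)$ has covariance $aI_2$, so $X_1,X_3$ are i.i.d.\ $N(0,a)$, the radial variable $R^2 := X_1^2+X_3^2$ has density $\frac{1}{2a}e^{-s/(2a)}$ on $(0,\infty)$, and $\p(R^2=0)=0$. Reading off the cross-covariance block $P := \begin{pmatrix} b & c \\ c & -b \end{pmatrix}$ of $(X_2,X_4)$ against $(X_1,X_3)$, the Gaussian conditioning formula gives that, conditionally on $(X_1,X_3)$, the pair $(X_2,X_4)$ is Gaussian with mean $\tfrac1a P(X_1,X_3)^T$ and --- the decisive simplification --- conditional covariance $dI_2 - \tfrac1a P^2 = \tfrac{D}{a}I_2$, because $P^2=(b^2+c^2)I_2$. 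Thus $X_2,X_4$ are conditionally independent, each of variance $D/a$.

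Pushing this through the numerator $N := X_1X_2 - X_3X_4$, its conditional mean telescopes to $\tfrac{b}{a}(X_1^2+X_3^2)=\tfrac{b}{a}R^2$, while conditional independence makes its conditional variance $(X_1^2+X_3^2)\tfrac{D}{a}=\tfrac{D}{a}R^2$. Hence on the full-probability event $\{R^2>0\}$ the ratio $Y=N/R^2$ is conditionally Gaussian, $Y\mid(X_1,X_3)\sim N\!\left(\tfrac{b}{a},\tfrac{D}{aR^2}\right)$, with a conditional mean free of $R^2$. The unconditional law is then the scale mixture
$$
p_Y(y)=\int_0^\infty \sqrt{\frac{as}{2\pi D}}\,\exp\!\left(-\frac{(ay-b)^2 s}{2aD}\right)\frac{1}{2a}e^{-s/(2a)}\,ds,
$$
and collecting the $s$-dependence into a single exponential reduces this to the Gamma integral $\int_0^\infty \sqrt{s}\,e^{-\beta s}\,ds=\Gamma(3/2)\beta^{-3/2}$ with $\beta=\frac{(ay-b)^2+D}{2aD}$; inserting $\Gamma(3/2)=\sqrt\pi/2$ and simplifying gives exactly \eqref{pdf37}. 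Because \eqref{pdf37} depends on $y$ only through $(ay-b)^2$ it is symmetric about $y=b/a$ and decays like $|y|^{-3}$, so the first moment converges absolutely to $b/a$ while $\int y^2 p_Y(y)\,dy$ diverges, yielding infinite variance.

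For $D=0$ the same conditioning settles both subcases. If $a>0$, the conditional covariance $\tfrac{D}{a}I_2$ vanishes, so $(X_2,X_4)$ equals its conditional mean a.s.; then $N=\tfrac{b}{a}R^2$ and $Y=b/a$ a.s.\ on $\{R^2>0\}$, an event of full probability. If $a=0$, then $D=-b^2-c^2\ge 0$ forces $b=c=0$, and $\mathrm{Var}(X_1)=\mathrm{Var}(X_3)=a=0$ gives $X_1=X_3=0$ a.s., whence $R^2=0$ and $Y=+\infty$ a.s.\ by \eqref{ifvar1}. Throughout, Lemma \ref{nollmlem1} lets me discard the null event $\{R^2=0\}$ (on which $Y=+\infty$) when identifying the density on $\R$. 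I expect the main obstacle to be the conditional-covariance computation: everything hinges on $P^2$ being a scalar matrix, which is what makes the conditional variance of $N$ proportional to $R^2$ and the conditional mean of $Y$ the constant $b/a$; without this the mixing integral would not collapse to a closed form.
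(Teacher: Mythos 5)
Your proof is correct, and it takes a genuinely different route from the paper's. The paper proves the nondegenerate case by an explicit four-dimensional change of variables: it introduces the polar-type map $f(y)=(y_1\cos y_2,\ y_3\sin y_2+y_1y_4\cos y_2,\ y_1\sin y_2,\ y_3\cos y_2-y_1y_4\sin y_2)$ with Jacobian determinant $y_1^2$, uses Lemma \ref{nollmlem1} to excise the null sets where $f$ fails to be bijective, writes down the joint density of $f^{-1}(X)$, and integrates out three variables to obtain \eqref{pdf37}; for the singular case $ad-b^2-c^2=0$, $a>0$, it regularizes by adding independent noise of variance $1/n$ to $X_2$ and $X_4$ and passes to the weak limit through characteristic functions. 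You instead condition on $(X_1,X_3)$ (which has covariance $aI_2$) and observe that the cross-covariance block $P=\bigl(\begin{smallmatrix} b & c\\ c & -b\end{smallmatrix}\bigr)$ satisfies $P^2=(b^2+c^2)I_2$, so the conditional covariance of $(X_2,X_4)$ collapses to the scalar matrix $\frac{D}{a}I_2$ and $Y$ is conditionally $N\bigl(\frac{b}{a},\frac{D}{aR^2}\bigr)$ with a conditional mean independent of the conditioning; the density then falls out of a single Gamma integral over the exponential law of $R^2$. I verified the conditional mean and variance of the numerator, the mixing integral, and the constant: your computation reproduces \eqref{pdf37} exactly, and the tail argument for $\E Y=b/a$ and infinite variance is sound. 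Your treatment of the degenerate cases is also simpler and equally rigorous: for $D=0$, $a>0$ the residual $W=(X_2,X_4)^T-\frac1aP(X_1,X_3)^T$ has covariance $\frac{D}{a}I_2=0$, hence vanishes a.s., giving $Y=b/a$ a.s.\ without any limiting argument, and $a=0$ forces $X_1=X_3=0$ a.s.\ directly. What the paper's approach buys that yours does not is the full joint density \eqref{pdf1} of envelope, phase, and the two derivative-type coordinates, which is of some independent interest; what yours buys is brevity, the complete avoidance of the Jacobian and null-set bookkeeping and of the regularization/weak-convergence machinery, and a transparent structural explanation of why the law is a heavy-tailed location family centred at $b/a$ (an exponential scale mixture of Gaussians with common mean).
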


\begin{proof}
Suppose first that $a d -c^2 - b^2>0$ which means that $M$ is invertible
with inverse
\begin{eqnarray}\label{Minv}
M^{-1}
 =|M|^{-1/2} \left(
  \begin{array}{rrrr}
    d & -b & 0 & -c \\
    -b & a & -c & 0 \\
    0 & -c & d & b \\
    -c & 0 & b & a
  \end{array}
  \right).
\end{eqnarray}
Furthermore $a d -c^2 - b^2>0$ implies $a>0$.
Define the smooth function $f: \R_+ \times (-\pi,\pi] \times \R
\times \R \longmapsto \R^4$, $f(y)=(f_1(y), f_2(y), f_3(y), f_4(y))$
by
\begin{equation}\label{f}
\begin{split}
f_1(y) & = y_1 \cos y_2, \\
f_2(y) & = y_3 \sin y_2 + y_1 y_4 \cos y_2, \\
f_3(y) & = y_1 \sin y_2, \\
f_4(y) & = y_3 \cos y_2 - y_1 y_4 \sin y_2.
\end{split}
\end{equation}
The Jacobian of $f$ is
\begin{eqnarray}\nonumber
Df(y)
 =\left(
  \begin{array}{cccc}
    \cos y_2 & -y_1 \sin y_2 & 0 & 0 \\
    y_4 \cos y_2 & y_3 \cos y_2 - y_1 y_4 \sin y_2 & \sin y_2 & y_1 \cos y_2 \\
    \sin y_2 & y_1 \cos y_2 & 0 & 0 \\
    -y_4 \sin y_2 & -y_3 \sin y_2 - y_1 y_4 \cos y_2 & \cos y_2 & -y_1 \sin y_2
  \end{array}
  \right),
\end{eqnarray}
whose determinant is $\det Df(y) = y_1^2$. Note that $(y_1,y_2) \mapsto
(f_1(y),f_3(y))$ is the polar-to-rectangular coordinate
transformation on $\R^2$.

We will use $f$ as a coordinate transformation, and then we will need $f$ to be a bijection with a differentiable inverse. Since $f(0,y_2,0,y_4) = 0$ for any $y_2 \in (-\pi,\pi]$ and any $y_4 \in \R$, the function $f$ is not injective on the domain $\R_+ \times (-\pi,\pi] \times \R
\times \R$. Therefore we need to restrict the domain of $f$. Define
\begin{equation}\label{Nprimdef}
N'= ( \{0\} \times (-\pi,\pi] \times \R \times \R ) \bigcup ( \R_+ \times \{ \pi \} \times \R \times \R ) \subset \R_+ \times (-\pi,\pi] \times \R \times \R
\end{equation}
and
\begin{equation}\label{Ndef}
N= \{x \in \R^4: x_1 \leq 0, \ x_3=0 \} = \R_- \times \R \times \{0\} \times \R \subset \R^4.
\end{equation}
Then it can be verified that $f(N')=N$ and the restriction
\begin{equation}\label{frestriction1}
f: \R_+ \times (-\pi,\pi] \times \R \times \R \setminus N' \longmapsto \R^4 \setminus N
\end{equation}
of $f$ to the open set $\R_+ \times (-\pi,\pi] \times \R \times \R \setminus N'$ is surjective and injective.
Its inverse is $f^{-1}=g=(g_1,g_2,g_3,g_4)$ where
\begin{equation}\label{g}
\begin{split}
g_1(x) & = (x_1^2 + x_3^2)^{1/2} , \\
g_2(x) & = \arg(x_1 + i x_3), \\
g_3(x) & = \frac{x_2 x_3 + x_1 x_4}{(x_1^2 + x_3^2)^{1/2}}, \\
g_4(x) & = \frac{x_1 x_2 - x_3 x_4}{x_1^2 + x_3^2}.
\end{split}
\end{equation}
Since $x \notin N \Rightarrow x_1^2 + x_3^2>0$ it is clear that $g_1$, $g_3$, and $g_4$ are differentiable on $\R^4 \setminus N$. Since the nonpositive $x_1$-axis in the $(x_1,x_3)$-plane, in conjunction with any $(x_2,x_4) \in \R^2$, does not belong to $\R^4 \setminus N$, $g_2$ is differentiable on $\R^4 \setminus N$. Hence the restriction of $f$ \eqref{frestriction1} is differentiable and has a differentiable inverse.

The sets $N' \subset \R_+ \times (-\pi,\pi] \times \R \times \R$ and $N \subset \R^4$ are null sets with respect to Lebesgue measure, and $P_X(N)=0$. Let $x_0 \in \R^4 \setminus N$ be fixed arbitrary and define the
random variable $\wt X: \Omega \mapsto \R^4 \setminus N$ by
\begin{equation}\label{modvar1}
\wt X(\omega) = \left\{ \begin{array}{ll}
             X(\omega), & \omega \in \Omega \setminus X^{-1}(N),  \\
             x_0 & \omega \in X^{-1}(N).
           \end{array} \right.
\end{equation}
It follows from Lemma \ref{nollmlem1} that $\wt X$ and $X$ induce
identical probability measures, i.e. $P_{\wt X}(A)=P_X(A)$ for all
$A \in \B(\R^4)$, which means that
\begin{equation}\label{tildepdf1}
P_{\wt X}(A) = (2 \pi)^{-2} |M|^{-1/2} \int_{A} \exp\left(-\frac1{2}x^T M^{-1}
x\right) dx, \quad A \in \B(\R^4),
\end{equation}
because $X$ is a zero-mean Gaussian with covariance matrix $M$.

Define the random variable $Z: \Omega \mapsto \R^4$ by $Z=f^{-1}(\wt
X)$, which is well defined because $\wt X(\Omega) = \R^4 \setminus N$ and the restriction \eqref{frestriction1} is bijective. If we write $Z=(Z_1,Z_2,Z_3,Z_4)$ and $x_0=(x_{0,1},x_{0,2},x_{0,3},x_{0,4})$ then according to \eqref{ifvar1}, \eqref{g}
and \eqref{modvar1} we have
\begin{equation}\label{ifvar2}
Z_4(\omega) = \left\{ \begin{array}{ll}
             Y(\omega), & \omega \in \Omega \setminus X^{-1}(N),  \\
             \frac{x_{0,1} x_{0,2} - x_{0,3} x_{0,4}}{x_{0,1}^2 + x_{0,3}^2}, & \omega \in X^{-1}(N).
           \end{array} \right.
\end{equation}
We claim that the random variables $Y$ and $Z_4$ have identical probability measures, that is
\begin{equation}\label{probdisteq1}
P_Y(A) = P_{Z_4}(A), \quad A \in \mathcal B(\R).
\end{equation}
In fact, let $A \in \B(\R)$. We decompose $Y^{-1}(A) = (Y^{-1}(A) \cap X^{-1}(N))
\bigcup (Y^{-1}(A) \setminus X^{-1}(N))$. Since $\p( Y^{-1}(A)
\cap X^{-1}(N) ) \leq \p(X^{-1}(N))=P_X(N)=0$ and $Y^{-1}(A)
\setminus X^{-1}(N) = Z_4^{-1}(A) \setminus X^{-1}(N)$ according to \eqref{ifvar2}, we obtain
$P_Y(A) = \p(Y^{-1}(A)) = \p(Y^{-1}(A) \setminus X^{-1}(N)) =
\p(Z_4^{-1}(A) \setminus X^{-1}(N)) = \p(Z_4^{-1}(A)) = P_{Z_4}(A)$,
proving \eqref{probdisteq1}.

Denote the probability density functions for $X$, $\wt X$ and $Z$ by
$p_X$, $p_{\wt X}$ and $p_Z$, respectively. Then \eqref{tildepdf1}
says that $p_{\wt X}(x) = p_X (x) = (2 \pi)^{-2} |M|^{-1/2} \exp\left(-\frac1{2}x^T
M^{-1} x\right)$, $x \in \R^4$. For an arbitrary Borel set $A \in \mathcal B(\R_+
\times (-\pi,\pi] \times \R \times \R)$ we have
\begin{equation}\nonumber
\begin{aligned}
P_Z(A) & = \p(Z \in A) = \p( Z \in A \setminus(X \in N) ) = \p( \wt X \in f(A) \setminus(X \in N) ) \\
& = \p( X \in f(A) \setminus N ) = \p( X \in f(A \setminus N') ) \\
& = \int_{f(A \setminus N')} p_X (x) dx = \int_{A \setminus N'} p_X \circ f(y) | \det Df(y) | dy \\
& = \int_A p_X \circ f(y) | \det Df(y) | dy \\
& = \int_A p_X \circ f(y) | \det Df(y) | \ \chi_{[0,+\infty)}(y_1) \chi_{(-\pi,\pi]}(y_2) \ dy.
\end{aligned}
\end{equation}
In fact, the seventh equality above is the formula for changing variables in integrals
\cite[Theorem 5.8]{Fleming1}. To justify its use, we need the fact that \eqref{frestriction1} is differentiable and has a differentiable inverse, which has been proved above. On inserting \eqref{Minv} and \eqref{f} we obtain
\begin{equation}\label{pdf1}
\begin{split}
p_Z(z) & = p_X ( f(z) ) | \det Df(z) | \chi_{[0,+\infty)}(z_1) \chi_{(-\pi,\pi]}(z_2) \\
& = (2\pi)^{-2} |M|^{-\frac1{2}} z_1^2 \exp\left(-
\frac1{2} f(z)^T M^{-1} f(z) \right) \chi_{[0,+\infty)}(z_1) \chi_{(-\pi,\pi]}(z_2) \\
& = (2\pi)^{-2} |M|^{-\frac1{2}} z_1^2 \exp\left(- |M|^{-\frac1{2}} \left(
z_1^2(d+a z_4^2 -2 b z_4) +a z_3^2 -2c z_1 z_3 \right)/2 \right) \chi_{[0,+\infty)}(z_1) \chi_{(-\pi,\pi]}(z_2)
\end{split}
\end{equation}
after some computations. Using $\int_0^\infty x^2 \exp(-s x^2/2 ) dx
= \sqrt{\pi/2} s^{-3/2}$, $s > 0$, and $|M| = (a d -c^2 - b^2)^2$,
we obtain the marginal probability density for $Z_4$
\begin{equation}\nonumber
\begin{split}
p_{Z_4}(z_4) & = \int_0^\infty \int_{-\pi}^{\pi} \int_{-\infty}^\infty
p_Z ( z_1,z_2,z_3,z_4 ) d z_1 d z_2 d
z_3  \\
& = (2\pi)^{-1} |M|^{-1/2} \int_0^\infty z_1^2 \exp\left(-
|M|^{-1/2} \left( z_1^2(d+a z_4^2 -2 b z_4)/2 \right)
\right) \\
& \quad \times \left( \int_{-\infty}^\infty \exp\left(- |M|^{-1/2}
\left( a z_3^2 -2c z_1 z_3 \right)/2 \right) d
z_3 \right) d z_1 \\
& = (2\pi a)^{-1/2} |M|^{-1/4} \int_0^\infty z_1^2 \exp\left(-
|M|^{-1/2} z_1^2(d+a z_4^2 -2 b z_4 - c^2/a)/2
\right) d z_1 \\
& = \frac{a}{2} |M|^{1/2} \left(  (a z_4 -b)^2 + |M|^{1/2}
\right)^{-3/2}.
\end{split}
\end{equation}
The earlier noted observation \eqref{probdisteq1} now proves \eqref{pdf37}. Finally, $\frac{d}{dx} ( x(x^2+s)^{-1/2}) = s(x^2+s)^{-3/2}$ gives
$$
\int_{-\infty}^\infty \frac{dx}{(x^2 + s)^{3/2}} = \frac2{s}
\int_0^\infty \frac{d}{dx} ( x(x^2+s)^{-1/2}) dx = \frac{2}{s},
\quad s > 0,
$$
so we have
\begin{equation}\nonumber
\begin{split}
\E Z_4 & = \frac{a}{2} |M|^{1/2} \int_{-\infty}^\infty \frac{z
dz}{\left(
(a z -b)^2 + |M|^{1/2} \right)^{3/2}} \\
& = \frac{|M|^{1/2} b}{2a}  \int_{-\infty}^\infty \frac{dz}{\left(
z^2 + |M|^{1/2} \right)^{3/2}} \\
& = \frac{b}{a}.
\end{split}
\end{equation}
Note that $p_{Z_4}(y)$ behaves like $C y^{-3}$ for large $y$. Therefore
$$
\int_\R (y-\E Y_4)^2 p_{Z_4}(y)dy=+\infty
$$
which means that the variance of $Z_4$, and therefore also that of $Y$, is infinite.
This proves the proposition in the case $a d -c^2 - b^2>0$.

It remains to consider the case when $a d -c^2 - b^2=0$, i.e. $M$ is
not invertible. If $a=0$ then $X_1=0$ a.s. and $X_3=0$ a.s.
$\Rightarrow Y = +\infty$ a.s. This proves case (ii). Assume henceforth that $a>0$. If
$a>0$ and $d=0$ then $X_2=0$ a.s., $X_4=0$ a.s. and $X_1^2+X_3^2>0$
a.s. $\Rightarrow Y = 0$ a.s. Since $b^2+c^2=ad=0$ we have $b=0$ so case (i) is proved if $d=0$.

The rest of the proof is devoted to the case (i) with $a d -c^2 - b^2=0$, $a>0$ and $d>0$. We
will employ a regularization technique. The characteristic
polynomial of $M$ is $\det(\lambda I - M) = \lambda^2(\lambda-(a+d))^2$.
Let
$$
U_{a+d} = \mathcal N(M - (a+d) I) \subset \R^4
$$
denote the two-dimensional eigenspace for $M$ corresponding to the nonzero
eigenvalue $\lambda = a+d$. The probability measure $P_X$ is
a Gaussian which is supported on $U_{a+d}$ and non-degenerate on this two-dimensional subspace.
With $N$ defined by \eqref{Ndef}, it can be verified that $N \cap U_{a+d} \subset W \subset \R^4$ where $W$ is a linear subspace with $\dim W=1$. This means that
$P_X(N)=0$. If $\wt X$ is defined by \eqref{modvar1} for some
$x_0 \in \R^4 \setminus N$, then Lemma \ref{nollmlem1} gives $P_X = P_{\wt X}$.

Let $n>0$ be an integer and define $X^{(n)}=(X_1,X_2 + X_2'/\sqrt{n},X_3, X_4 + X_4'/\sqrt{n})$ where $X_2'$, $X_4'$ are Gaussian zero-mean, unit-variance random variables, pairwise independent of each other and of $X_1$, $X_2$, $X_3$ and $X_4$. Then $X^{(n)}: \Omega \mapsto \R^4$ is a
Gaussian random variable with covariance matrix
\begin{eqnarray}\nonumber
M_n =\left(
  \begin{array}{cccc}
    a & b & 0 & c \\
    b & d+\frac1{n} & c & 0 \\
    0 & c & a & -b \\
    c & 0 & -b & d+\frac1{n}
  \end{array}
  \right),
\end{eqnarray}
the determinant of which is $|M_n| = a^2/n^2>0$, so $M_n$ is invertible.
For the characteristic functions of $X^{(n)}$ and $X$, denoted by $\phi_{X^{(n)}}$ and $\phi_{X}$ respectively, we have
\begin{equation}\nonumber
\begin{split}
\phi_{X^{(n)}}(\xi) & = \E( \exp(i \xi^T X^{(n)})) = \exp\left( -\frac1{2}\xi^T M_n \xi \right)
= \exp\left(-\frac1{2}\xi^T M \xi - \frac1{2n}(\xi_2^2 + \xi_4^2) \right) \\
& \longrightarrow \exp\left(-\frac1{2}\xi^T M \xi \right) = \phi_{X}(\xi), \quad n
\longrightarrow +\infty, \quad \xi \in \R^4.
\end{split}
\end{equation}
This is equivalent to the \emph{weak} convergence of probability
measures $P_{X^{(n)}} \rightarrow P_X$ as $n \rightarrow + \infty$
\cite[Theorem 7.6]{Billingsley1}, which means that $\int h(x) P_{X^{(n)}}(dx)
\longrightarrow \int h(x) P_{X}(dx)$ for all bounded and continuous
functions $h$ defined on $\R^4$.

If we define the random variable $\wt X^{(n)}$ by
\begin{equation}\nonumber
\wt X^{(n)}(\omega)  = \left\{ \begin{array}{ll}
             X^{(n)}(\omega), & \omega \in \Omega \setminus (X^{(n)})^{-1}(N),  \\
             x_0 & \omega \in (X^{(n)})^{-1}(N).
           \end{array} \right.
\end{equation}
with $N$ defined by \eqref{Ndef} and $x_0 \in \R^4 \setminus N$ fixed for all $n>0$, then Lemma \ref{nollmlem1} again gives $P_{X^{(n)}} = P_{\wt X^{(n)}}$ for all $n>0$, so we have $P_{\wt X^{(n)}} \rightarrow P_{\wt X}$ weakly.
Let us define $Z=f^{-1}(\wt X)$ and $Z^{(n)}=f^{-1}(\wt X^{(n)})$ where $Z=(Z_1,Z_2,Z_3,Z_4)$ and $Z^{(n)}=(Z_1^{(n)},Z_2^{(n)},Z_3^{(n)},Z_4^{(n)})$.
Then we have $P_Y=P_{Z_4}$ as in the first part of the proof.
Since $f^{-1}$ is continuous on
the range spaces of $\wt X$ and $\wt X^{(n)}$,
and $P_{\wt X^{(n)}} \rightarrow P_{\wt X}$ weakly, we may conclude that $P_{Z^{(n)}} \rightarrow P_{Z}$ weakly \cite[Section 1.5]{Billingsley1}. The weak convergence $P_{Z^{(n)}} \rightarrow
P_{Z}$ is equivalent to the limit of characteristic functions
\cite{Billingsley1}
\begin{equation}\label{charac1}
\phi_Z(\xi) = \lim_{n \rightarrow \infty} \phi_{Z^{(n)}} (\xi)
\quad \forall \xi \in \R^4.
\end{equation}
In the following we will compute $\phi_{Z^{(n)}} (\xi)$ using the probability density
computed in \eqref{pdf1} for invertible covariance matrix $M$. Since
$|M_n|^{1/2} = a/n$ we obtain from \eqref{pdf1} the pdf of $Z^{(n)}$ as
\begin{equation}\label{pdf2}
p_{Z^{(n)}}(z) = \frac{n}{a(2\pi)^2} z_1^2 \exp\left(- \frac{n}{2a}
\left( z_1^2(d+1/n+a z_4^2 -2 b z_4) +a z_3^2 -2c z_1 z_3 \right)
\right) \chi_{[0,+\infty)}(z_1) \chi_{(-\pi,\pi]}(z_2).
\end{equation}
It depends trivially on $z_2$, so we may concentrate on the $\R^3$-valued random variable $U^{(n)}=(Z_1^{(n)}, Z_3^{(n)}, Z_4^{(n)})$. The marginal probability density for $U^{(n)}$ is
\begin{equation}\nonumber
\begin{aligned}
p_{U^{(n)}}(z_1,z_3,z_4) & = \int_{-\pi}^\pi p_{Z^{(n)}}(z_1,z_2,z_3,z_4) d z_2 \\
& =  \frac{n}{2\pi a} z_1^2 \exp\left(- \frac{n}{2a}
\left( z_1^2(d+1/n+a z_4^2 -2 b z_4) +a z_3^2 -2c z_1 z_3 \right)
\right) \chi_{[0,+\infty)}(z_1).
\end{aligned}
\end{equation}
Thus the characteristic function of $U^{(n)}$ is
\begin{equation}\nonumber
\begin{split}
\phi_{U^{(n)}} (\xi) & = \phi_{U^{(n)}} (\xi_1,\xi_3,\xi_4) = \E( \exp(i (\xi_1,\xi_3,\xi_4)^T U^{(n)}) ) \\
& = \iiint_{\R^3} p_{U^{(n)}}(z_1,z_3,z_4) \exp( i (z_1 \xi_1 + z_3 \xi_3 + z_4 \xi_4) ) d z_1 d z_3 d z_4  \\
& = \frac{n}{2\pi a}  \int_0^{+\infty}
z_1^2 \exp \left(  - \frac{n}{2a} z_1^2 \left(d+\frac1{n}\right) + i z_1 \xi_1  \right) \\
& \quad \times \left( \int_{-\infty}^{+\infty} \int_{-\infty}^{+\infty} \exp \left( - \frac{n}{2} z_1^2 \left( z_4^2-\frac{2b}{a}z_4 \right) +i z_4 \xi_4 -\frac{n}{2} \left( z_3^2-\frac{2c}{a} z_1 z_3 \right) + i z_3 \xi_3 \right) d z_3 d z_4 \right) d z_1 \\
& = a^{-1} \exp\left( i \xi_4 \frac{b}{a} - \frac{\xi_3^2}{2n} \right)
\int_0^{+\infty} z_1 \exp \left( -\frac{z_1^2}{2a} + i z_1 \left( \xi_1+\xi_3 \frac{c}{a} \right) - \frac{\xi_4^2}{2 n z_1^2} \right) d z_1, \quad (\xi_1,\xi_3,\xi_4) \in \R^3.
\end{split}
\end{equation}
Set $U=(Z_{1}, Z_{3}, Z_{4})$. It follows from \eqref{charac1} and dominated convergence that
\begin{equation}\label{charac2}
\begin{aligned}
\phi_U (\xi_1,\xi_3,\xi_4) & = \phi_Z(\xi_1,0,\xi_3,\xi_4) = \lim_{n \longrightarrow \infty} \phi_{Z^{(n)}}
(\xi_1,0,\xi_3,\xi_4) = \lim_{n \longrightarrow \infty} \phi_{U^{(n)}} (\xi_1,\xi_3,\xi_4) \\
& = a^{-1} \exp\left( i \xi_4 \frac{b}{a} \right)
\int_0^{+\infty} z_1 \exp \left( -\frac{z_1^2}{2a} + i z_1 \left( \xi_1+\xi_3 \frac{c}{a} \right) \right) d z_1, \quad (\xi_1,\xi_3,\xi_4) \in \R^3.
\end{aligned}
\end{equation}
From \eqref{charac2} we conclude that the characteristic function
for $Z_4$ is
$$
\phi_{Z_4}(\xi_4) = \phi_U (0,0,\xi_4) = a^{-1} \exp\left( i \xi_4 \frac{b}{a} \right)
\int_0^{+\infty} z_1 \exp \left( -\frac{z_1^2}{2a} \right) d z_1 = \exp\left( i \xi_4 \frac{b}{a} \right),
$$
which implies that the probability measure for $Z_4$ is
$P_{Z_4} = \delta_{b/a} $. Finally this gives $P_Y = P_{Z_4}
= \delta_{b/a} \Rightarrow Y=b/a$ a.s.
\end{proof}

\begin{rem}\label{millerrem1}
If $z$ is WSS then $r_x(t,s)=\rho_x(t-s)$ for some even function $\rho_x$, which implies $\partial \rho_x(0)=0 \Longrightarrow c = \partial_1 r_x (t,t)=0$. In this case the matrix $M$ is block-diagonal and this special case of Proposition \ref{grundprop1} was proved by Miller \cite{Miller1}.
\end{rem}

As a corollary to Proposition \ref{grundprop1} we obtain the following result for Gaussian stochastic processes. To formulate it we need to introduce a partition of the time axis into two disjoint sets, depending on the covariance function $r_z$ for a given process $z$.  The partition is
\begin{equation}\label{Tdef}
\begin{aligned}
T & := \{ t \in \R: r_x(t,t) \partial_1 \partial_2 r_x(t,t) -
(\partial_1 r_{yx}(t,t))^2 - (\partial_1 r_{x}(t,t))^2 = 0 \}, \\
\Longrightarrow T^c=\R \setminus T & = \{ t \in \R: r_x(t,t) \partial_1 \partial_2 r_x(t,t) -
(\partial_1 r_{yx}(t,t))^2 - (\partial_1 r_{x}(t,t))^2 > 0 \},
\end{aligned}
\end{equation}
since $r_x(t,t) \partial_1 \partial_2 r_x(t,t) - (\partial_1 r_{yx}(t,t))^2 - (\partial_1 r_{x}(t,t))^2 \geq 0$ holds for all $t \in \R$ due to \eqref{parametrar1} and \eqref{parameterolikhet1}. It is clear that $T \subseteq \R$ is closed since it is the inverse image of $\{0\}$ of a continuous function. We further subdivide $T=T' \cup T''$, where $T'' \subseteq \R$ is closed, as a disjoint union of the measureable sets defined by
\begin{equation}\nonumber
\begin{aligned}
T' & := \{ t \in T:  r_x(t,t)>0 \}, \\
T'' & := \{ t \in T: r_x(t,t)=0 \}.
\end{aligned}
\end{equation}

\begin{cor}\label{gaussprop1}
Let $z(t)=x(t)+i y(t)=|z(t)| e^{i \varphi(t)}$ be a zero-mean proper Gaussian stochastic process which is differentiable according to Definition \ref{differentiable1} and let the instantaneous frequency stochastic process $\dot{\varphi}(t)$ be defined by \eqref{if1} for all $t \in \R$. Fix $t \in \R$, define $a,b,c,d$ by \eqref{parametrar1} and the matrix $M$ by \eqref{Mdef}. Then $\dot{\varphi}(t)$ has pdf
\begin{equation}\label{ifpdf1}
p_{\dot{\varphi}(t)}(y) =
\left\{
  \begin{array}{ll}
    \frac{a}{2} |M|^{1/2} \left(  (a y -b)^2 + |M|^{1/2} \right)^{-3/2} & \mbox{if } t \in \R \setminus T, \\
    \delta_{b/a}(y)  & \mbox{if } t \in T', \\
\end{array}
\right.
\end{equation}
and if $t \in T''$ then $\dot{\varphi}(t)=+\infty$ a.s. Consequently $\dot{\varphi}(t)$ has mean
\begin{equation}\nonumber
\E \dot{\varphi}(t) =
\left\{
  \begin{array}{ll}
    \frac{b}{a} & \mbox{if } t \in \R \setminus T'', \\
    +\infty & \mbox{if } t \in T'',
\end{array}
\right.
\end{equation}
and variance
\begin{equation}\nonumber
\E (\dot{\varphi}(t) - \E \dot{\varphi}(t))^2 =
\left\{
  \begin{array}{ll}
    +\infty & \mbox{if } t \in \R \setminus T, \\
    0 & \mbox{if } t \in T', \\
    \mbox{undefined} & \mbox{if } t \in T''.
\end{array}
\right.
\end{equation}
\end{cor}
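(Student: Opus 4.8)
The plan is to recognize that Corollary \ref{gaussprop1} is essentially a restatement of Proposition \ref{grundprop1} in the language of the IF stochastic process, so the work is matching objects rather than new estimation. First I would fix $t \in \R$ and form the zero-mean Gaussian four-vector $X=(x(t),\dot y(t),y(t),\dot x(t))^T$. As already established in the discussion preceding Lemma \ref{nollmlem1}---Gaussianity survives mean-square limits, and properness \eqref{proper1} forces the cross-correlations into the stated pattern---the covariance matrix of $X$ has exactly the form \eqref{Mdef} with $a,b,c,d$ given by \eqref{parametrar1}.

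The key identification is that, with this choice of $X$, the random variable $Y$ of \eqref{ifvar1} is literally $\dot\varphi(t)$ of \eqref{if1}: indeed $X_1 X_2 - X_3 X_4 = x(t)\dot y(t)-\dot x(t) y(t)$ and $X_1^2+X_3^2 = x^2(t)+y^2(t)$, and both conventions assign the value $+\infty$ on the common event $\{x^2(t)+y^2(t)=0\}$. Hence $P_{\dot\varphi(t)}=P_Y$, and the distribution of the IF at time $t$ is precisely the one computed in Proposition \ref{grundprop1}.

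Next I would translate the three regimes of the proposition into the partition \eqref{Tdef}. Under the identifications \eqref{parametrar1} the discriminant $ad-c^2-b^2$ equals $r_x(t,t)\partial_1\partial_2 r_x(t,t)-(\partial_1 r_{yx}(t,t))^2-(\partial_1 r_x(t,t))^2$, so $ad-c^2-b^2>0 \Leftrightarrow t\in\R\setminus T$ and $ad-c^2-b^2=0 \Leftrightarrow t\in T$; within $T$ the subcases $a>0$ and $a=0$ are by definition exactly $t\in T'$ and $t\in T''$. Reading off the three conclusions of the proposition then yields the density \eqref{pdf37} on $\R\setminus T$, the Dirac mass $\delta_{b/a}$ on $T'$, and $\dot\varphi(t)=+\infty$ a.s.\ on $T''$, which is \eqref{ifpdf1}.

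Finally the mean and variance are recorded case by case: on $\R\setminus T$ the proposition already gives $\E\dot\varphi(t)=b/a$ and infinite variance; on $T'$ the law $\delta_{b/a}$ gives mean $b/a$ and variance $0$; and on $T''$ the identity $\dot\varphi(t)=+\infty$ a.s.\ gives mean $+\infty$ and leaves the variance undefined. I do not anticipate a genuine obstacle: the substance is entirely contained in Proposition \ref{grundprop1}, and the only points needing care are the (already verified) covariance structure \eqref{Mdef} and the two degenerate regimes $T'$, $T''$, whose mean and variance must be read from the degenerate distributions rather than obtained by integrating a density.
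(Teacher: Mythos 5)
Your proposal is correct and matches the paper's (implicit) argument: the paper states the corollary without a separate proof, precisely because it follows from Proposition \ref{grundprop1} by the identification $X=(x(t),\dot y(t),y(t),\dot x(t))^T$, $Y=\dot\varphi(t)$, and the translation of the cases $ad-c^2-b^2>0$, $ad-c^2-b^2=0$ with $a>0$, and $a=0$ into $t\in\R\setminus T$, $t\in T'$, $t\in T''$ via \eqref{parametrar1} and \eqref{Tdef}. Your handling of the degenerate regimes (reading mean and variance from $\delta_{b/a}$ and from $\dot\varphi(t)=+\infty$ a.s.) is exactly what is intended.
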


Finally we restrict to harmonizable processes whose spectral measure satisfies \eqref{spekmom1}, which admits a connection to the Wigner spectrum as follows.

\begin{thm}\label{gaussprop2}
Let $z(t)=x(t)+iy(t)=|z(t)| e^{i \varphi(t)}$ be a proper Gaussian harmonizable stochastic process whose spectral measure satisfies \eqref{spekmom1} and let the instantaneous frequency stochastic process $\dot{\varphi}(t)$ be defined by \eqref{if1} for all $t \in \R$.
Fix $t \in \R$. Then $\dot{\varphi}(t)$ has pdf \eqref{ifpdf1} where $a,b,c,d$ is defined by \eqref{parametrar1} and the matrix $M$ by \eqref{Mdef}.
The process $\dot{\varphi}(t)$ has mean
\begin{equation}\nonumber
\E \dot{\varphi}(t) =
\left\{
  \begin{array}{ll}
    \frac{\int_{\xi \in \R} \xi \wt W_z(t,d\xi)}{\int_{\xi \in \R} \wt W_z(t,d\xi)} & \mbox{if } t \in \R \setminus T'', \\
    +\infty & \mbox{if } t \in T'',
\end{array}
\right.
\end{equation}
and variance
\begin{equation}\nonumber
\E (\dot{\varphi}(t) - \E \dot{\varphi}(t))^2 =
\left\{
  \begin{array}{ll}
    +\infty & \mbox{if } t \in \R \setminus T, \\
    0 & \mbox{if } t \in T', \\
    \mbox{undefined} & \mbox{if } t \in T''.
\end{array}
\right.
\end{equation}
\end{thm}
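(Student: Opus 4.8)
The plan is to observe that this theorem is essentially a synthesis of Corollary \ref{gaussprop1} and Proposition \ref{ifwignerprop2}, so that almost no new computation is required. The first thing I would record is that the hypothesis---harmonizability together with the first-order spectral moment condition \eqref{spekmom1}---already guarantees, by the discussion following Definition \ref{spekmomdef1}, that $z$ is differentiable in the sense of Definition \ref{differentiable1} and that \eqref{derivatakorr1} holds. Hence Corollary \ref{gaussprop1} applies verbatim to $z$, and it delivers at once the pdf \eqref{ifpdf1} (with $a,b,c,d$ and $M$ defined by \eqref{parametrar1} and \eqref{Mdef}), the full trichotomy for the variance, and the mean equal to $b/a$ for $t \in \R \setminus T''$ and to $+\infty$ for $t \in T''$.

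It remains only to re-express the value $b/a$ of the mean on $\R \setminus T''$ as the normalized first-order frequency moment of the Wigner spectrum. For this I would invoke Proposition \ref{ifwignerprop2}, which requires $\E|z(t)|^2>0$. I would first check that this holds on $\R \setminus T''$: for $t \in \R \setminus T$ the strict inequality in \eqref{Tdef} forces $a>0$ (as already noted in the proof of Proposition \ref{grundprop1}, $ad-c^2-b^2>0 \Rightarrow a>0$), while for $t \in T'$ one has $a=r_x(t,t)>0$ by the definition of $T'$; since $\R \setminus T'' = (\R \setminus T) \cup T'$, we get $a>0$ throughout $\R \setminus T''$. Properness gives $r_{yx}(t,t)=0$ and $r_z=2r_x+2ir_{yx}$ by \eqref{proper1}, so $\E|z(t)|^2 = r_z(t,t) = 2a > 0$, and Proposition \ref{ifwignerprop2} is applicable.

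Finally I would connect the two means by showing $\E(\psi(t)) = b/a$. By \eqref{fuskdefinition1}, $\E(\psi(t)) = \E(x(t)\dot{y}(t) - \dot{x}(t) y(t))/\E|z(t)|^2$. The numerator equals $\partial_1 r_{yx}(t,t) - \partial_2 r_{yx}(t,t)$, exactly as computed in the proof of Proposition \ref{ifwignerprop2}; differentiating the antisymmetry relation $r_{yx}(t,s)=-r_{yx}(s,t)$ from \eqref{proper1} and setting $s=t$ gives $\partial_2 r_{yx}(t,t) = -\partial_1 r_{yx}(t,t) = -b$, so the numerator equals $2b$, while the denominator equals $2a$ as above. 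Hence $\E(\psi(t)) = b/a$, and Proposition \ref{ifwignerprop2} identifies this common value with the Wigner-spectrum ratio, completing the mean formula on $\R \setminus T''$; the value $+\infty$ on $T''$ is inherited directly from Corollary \ref{gaussprop1}.

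As for difficulty, there is no real analytic obstacle here: all the heavy lifting---the change-of-variables computation of the pdf, the regularization argument for singular $M$, and the Fourier-analytic evaluation of the frequency moment---has already been carried out in Proposition \ref{grundprop1} and Proposition \ref{ifwignerprop2}. The only point demanding care is the bookkeeping at the junction of the two results, namely verifying that $a>0$ exactly on $\R \setminus T''$ so that Proposition \ref{ifwignerprop2} applies, and tracking the factors of two relating $\E|z(t)|^2$ and $\E(x(t)\dot{y}(t) - \dot{x}(t) y(t))$ to $a$ and $b$ through the properness identities \eqref{proper1}.
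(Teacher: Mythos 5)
Your proposal is correct and follows essentially the same route as the paper: both reduce everything to Corollary \ref{gaussprop1} and then identify $b/a$ with the normalized first-order frequency moment of the Wigner spectrum using the properness identities \eqref{proper1} and the spectral representation. The only difference is organizational: the paper rederives the identities $4\pi a = \int_{\xi\in\R} \wt W_z(t,d\xi)$ and $4\pi b = \int_{\xi\in\R} \xi\, \wt W_z(t,d\xi)$ directly by differentiating \eqref{spektralintegral2}, whereas you reuse Proposition \ref{ifwignerprop2} together with the (correct) observations that $\E|z(t)|^2 = 2a > 0$ on $\R\setminus T''$ and $\E(x(t)\dot{y}(t)-\dot{x}(t)y(t)) = 2b$, so that $\E\psi(t) = b/a = \E\dot{\varphi}(t)$.
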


\begin{proof}
First we observe that \eqref{proper1}, \eqref{namnare2} and \eqref{parametrar1} give
\begin{equation}\nonumber
a = r_x(t,t) = \frac1{2} \ \re r_z(t,t) = \frac1{2} r_z(t,t) = \frac1{4 \pi} \int_{\xi \in \R} \wt W_z(t,d\xi).
\end{equation}
Next, from the requirement \eqref{spekmom1} it follows that \eqref{spektralintegral2} may be differentiated with respect to $\tau$ under the integral:
\begin{equation}\nonumber
\frac{\partial}{\partial \tau} \left( r_z \circ \kappa \right) (t,\tau) = \iint_{\R^2} i \xi e^{i (\tau \xi + t \eta)} m_z
\circ \kappa (d\xi,d\eta).
\end{equation}
Using \eqref{wvs2}, this gives
\begin{equation}\label{delresultat1}
\frac{\partial}{\partial \tau} \left( r_z \circ \kappa \right) (t,0) = i \iint_{\R^2} \xi e^{i t \eta} m_z
\circ \kappa (d\xi,d\eta) = \frac{i}{2 \pi} \int_{\xi \in \R} \xi \wt W_z(t,d\xi).
\end{equation}
On the other hand \eqref{proper1} implies $\partial_1 r_x(t,t)=\partial_2 r_x(t,t)$ and $\partial_2 r_{yx}(t,t)=-\partial_1 r_{yx}(t,t)$, which gives
\begin{equation}\label{delresultat2}
\begin{aligned}
\left. \frac{\partial}{\partial \tau} \left( r_z \circ \kappa \right) (t,\tau) \right|_{\tau=0} & =
\left. 2 \frac{\partial}{\partial \tau} \left( r_x \circ \kappa + i r_{yx} \circ \kappa \right) (t,\tau) \right|_{\tau=0} \\
& = \partial_1 r_x(t,t)-\partial_2 r_x(t,t) + i \left( \partial_1 r_{yx}(t,t) - \partial_2 r_{yx}(t,t) \right)  \\
& = 2 i \partial_1 r_{yx}(t,t).
\end{aligned}
\end{equation}
Combining \eqref{delresultat1}, \eqref{delresultat2} and \eqref{parametrar1} we have proved
\begin{equation}\nonumber
4 \pi b = \int_{\xi \in \R} \xi \wt W_z(t,d\xi).
\end{equation}
The result now follows from Corollary \ref{gaussprop1}, which in particular gives
\begin{equation}\label{vvinstfreq1}
\E \dot{\varphi}(t) = \frac{b}{a} = \frac{\int_{\xi \in \R} \xi
\wt W_z(t,d\xi)}{\int_{\xi \in \R} \wt W_z(t,d\xi)}
\end{equation}
provided $t \in \R \setminus T''$.
\end{proof}

\section{Cases of constantly zero or constantly infinite variance IF}\label{cases}

The process $\dot{\varphi}(t)$ exhibits completely different behavior on $t \in T'=T\setminus T''$ (where it is equal to $b/a$ with probability one), and $t \in \R \setminus T$ (where it has infinite variance with mean value $b/a$). It is therefore of interest to investigate questions like necessary or sufficient conditions for $T=\emptyset$ or $T=\R$. We restrict to harmonizable processes whose spectral measure satisfies \eqref{spekmom1}. First we look at WSS processes.

\begin{prop}\label{wssprop1}
Suppose $z$ satisfies the requirements of Theorem \ref{gaussprop2} and $z$ is WSS with covariance function $r_z(t,s)=\rho_z(t-s)$ and nonzero. Then we have $T''=\emptyset$, and either $T=\emptyset$ or $T=T'=\R$. In the latter case $\rho_z=2 \rho_x+ 2i \rho_{yx}$ where $\rho_x(t)=\alpha \cos( \beta t)$, $\alpha>0$ and $\beta \geq 0$.
\end{prop}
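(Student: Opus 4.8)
The plan is to exploit the fact that for a WSS process the four parameters $a,b,c,d$ of \eqref{parametrar1} do not depend on $t$, so that the sign of $a d - c^2 - b^2$ is a single number rather than a function of time. First I would record that $r_x(t,s)=\rho_x(t-s)$ with $\rho_x$ even forces $c=\partial_1 r_x(t,t)=\rho_x'(0)=0$ (as in Remark \ref{millerrem1}), while $a=\rho_x(0)$, $b=\rho_{yx}'(0)$ and $d=-\rho_x''(0)$ are likewise constants. Since $z$ is nonzero, $a=\rho_x(0)=\E x(t)^2$ cannot vanish: otherwise $x\equiv 0$ a.s., and by properness $r_y=r_x$ gives $y\equiv 0$ a.s., so $z\equiv 0$. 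Hence $a>0$ for every $t$, which immediately yields $T''=\emptyset$ and $T=T'$. Because $ad-c^2-b^2$ is constant in $t$, the set $T$ on which it vanishes is either empty (if the constant is positive) or all of $\R$ (if it is zero), giving the claimed dichotomy $T=\emptyset$ or $T=T'=\R$.

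It remains to identify $\rho_x$ in the case $T=\R$, i.e.\ when $ad-b^2=0$. Here I would pass to the spectral representation: for a WSS proper process the spectral measure is $m_z(A,B)=\mu_z(A\cap B)$ for a nonnegative bounded measure $\mu_z$, so $\rho_z(\tau)=\int_\R e^{i\tau\xi}\,\mu_z(d\xi)$, and \eqref{proper1} gives $\rho_x=\frac12\re\rho_z$ and $\rho_{yx}=\frac12\im\rho_z$. The spectral-moment hypothesis \eqref{spekmom1} specializes for WSS to $\int_\R(1+\xi^2)\,\mu_z(d\xi)<\infty$, which (as in \eqref{spektralderivata1}) licenses differentiating twice under the integral sign. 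Evaluating the resulting expressions at $\tau=0$ then gives $a=\frac12\mu_z(\R)$, $b=\frac12\int_\R\xi\,\mu_z(d\xi)$, $c=0$ and $d=\frac12\int_\R\xi^2\,\mu_z(d\xi)$.

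The decisive observation is that $4(ad-b^2)=\mu_z(\R)\int_\R\xi^2\,\mu_z(d\xi)-\bigl(\int_\R\xi\,\mu_z(d\xi)\bigr)^2$ is exactly the Cauchy--Schwarz defect for the pair of functions $\xi\mapsto\xi$ and $\xi\mapsto 1$ in $L^2(\mu_z)$. Thus $ad-b^2=0$ holds if and only if these two functions are linearly dependent in $L^2(\mu_z)$; since the constant function is not the zero element (as $a>0$ forces $\mu_z\neq 0$), this forces $\xi=\beta_0$ for $\mu_z$-a.e.\ $\xi$, i.e.\ $\mu_z=\alpha'\delta_{\beta_0}$ is a single atom with $\alpha'=\mu_z(\R)>0$. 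Feeding this back into $\rho_z(\tau)=\alpha' e^{i\beta_0\tau}$ gives $\rho_x(\tau)=\frac{\alpha'}2\cos(\beta_0\tau)$; writing $\alpha=\alpha'/2>0$ and $\beta=|\beta_0|\geq 0$ (using that cosine is even) yields $\rho_x(t)=\alpha\cos(\beta t)$, together with $\rho_z=2\rho_x+2i\rho_{yx}$, as claimed.

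The computations are all routine; the only genuine content is the recognition that the degeneracy condition $ad-b^2=0$ is the equality case of Cauchy--Schwarz, with the rigidity consequence that $\mu_z$ collapses to a point mass. The one place to be careful is justifying the moment formulas for $b,c,d$, i.e.\ the differentiation under the integral sign, which is precisely what \eqref{spekmom1} guarantees; beyond that I anticipate no real obstacle.
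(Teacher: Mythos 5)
Your proof is correct, and it reaches the same reduction as the paper but resolves the degenerate case by a genuinely different route. The first half (for WSS the parameters $a,b,c,d$ are constants, $c=\rho_x'(0)=0$, $a=\rho_x(0)>0$ by nonzeroness, hence $T''=\emptyset$, $T=T'$, and $T$ is $\emptyset$ or $\R$ according to the sign of the constant $ad-b^2$) coincides with the paper's argument. For the rigidity step when $T=\R$, you pass to the spectral domain: you identify $2a=\mu_z(\R)$, $2b=\int_\R\xi\,\mu_z(d\xi)$, $2d=\int_\R\xi^2\,\mu_z(d\xi)$ and read $ad-b^2=0$ as the equality case of Cauchy--Schwarz for the pair $1,\xi$ in $L^2(\mu_z)$, which collapses $\mu_z$ to a single atom $\alpha'\delta_{\beta_0}$ and gives $\rho_z(\tau)=\alpha'e^{i\beta_0\tau}$, hence $\rho_x(\tau)=\tfrac{\alpha'}{2}\cos(\beta_0\tau)$. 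The paper instead stays in $L_0^2(\Omega)$: it reads the same degeneracy as equality in Cauchy--Schwarz for the random variables $\dot y(t)$ and $x(t)$ (using $\E\dot x(t)^2=\E\dot y(t)^2$, which follows from $r_x=r_y$), deduces $\dot y(t)=c(t)x(t)$ with $c(t)$ forced to be constant by comparing covariances, and solves the resulting differential equation $-\rho_x''=c^2\rho_x$. Both arguments are fully justified under the hypotheses of Theorem \ref{gaussprop2} (harmonizability with $m_z(A,B)=\mu_z(A\cap B)$ and the moment condition \eqref{spekmom1}, which licenses your differentiation under the integral). Your version buys a slightly stronger structural conclusion --- the whole covariance is $\rho_z(\tau)=\alpha'e^{i\beta_0\tau}$, so $\rho_{yx}(\tau)=\tfrac{\alpha'}{2}\sin(\beta_0\tau)$ is determined as well --- whereas the paper's time-domain argument has the advantage of not using the spectral representation at all, so it would apply to any WSS process differentiable in the sense of Definition \ref{differentiable1}.
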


\begin{proof}
The assumption that $z$ is nonzero means that $\rho_x(0)>0$.
As observed in Remark \ref{millerrem1} we have $c = \partial_1 r_x (t,t)=0$ if $z$ is WSS.
Since $\partial_1 \partial_2 r_x(t,s)= -\partial^2 \rho_x(t-s)$ we have
$$
T = \{ t \in \R: r_x(t,t) \partial_1 \partial_2 r_x(t,t) =
(\partial_1 r_{yx}(t,t))^2 \} = \{ t \in \R: -\rho_x(0) \partial^2 \rho_x(0) =
(\partial \rho_{yx}(0))^2 \}.
$$
It is thus clear that either $T=\R$ or $T=\emptyset$, depending on whether $-\rho_x(0) \partial^2 \rho_x(0) =
(\partial \rho_{yx}(0))^2$ or $-\rho_x(0) \partial^2 \rho_x(0) > (\partial \rho_{yx}(0))^2$ is satisfied. If $T=\emptyset$ then of course $T''=\emptyset$.
Thus suppose that $T=\R$, i.e. $-\rho_x(0) \partial^2 \rho_x(0) = (\partial \rho_{yx}(0))^2$. This means that
$$
\E x(t) x(t) \E \dot{x}(t) \dot{x}(t) = (\E \dot{y}(t) x(t))^2 \leq \E x(t) x(t) \E \dot{y}(t) \dot{y}(t) = \E x(t) x(t) \E \dot{x}(t) \dot{x}(t), \quad t \in \R,
$$
where the inequality is the Cauchy--Schwarz inequality, and where $\E \dot{x}(t) \dot{x}(t)=\E \dot{y}(t) \dot{y}(t)=\partial_1 \partial_2 r_x(t,t)$ since $r_x=r_y$. Therefore we have equality in Cauchy--Schwarz, $\E x(t) x(t) \E \dot{y}(t) \dot{y}(t) = (\E \dot{y}(t) x(t))^2$ for all $t \in \R$. This means that $\dot{y}(t)$ equals a real multiple of $x(t)$ as a member of $L_0^2(\Omega)$, that is $\dot{y}(t) = c(t) x(t)$ where $c(t) \in \R$.
This gives $r_{\dot{y}}(t,s) = \partial_1 \partial_2 r_y (t,s)=- \partial^2 \rho_x(t-s) = c(t) c(s) \rho_x(t-s)$ $\Longrightarrow c(t) = \pm \sqrt{ - \partial^2 \rho_x(0)/\rho_x(0)}$, that is $c(t)$ is constant.
We obtain the differential equation $- \partial^2 \rho_x(t)=c^2 \rho_x(t)$ with solution $\rho_x(t)=\alpha \cos( c t)= \alpha \cos(|c| t)$ (since $\rho_x$ is even) where $\alpha=\rho_x(0)=\E x(t)^2>0$, and $\beta=|c| \geq 0$. Finally, $T=T'$, that is $T''=\emptyset$, since $\rho_x(0)>0$.
\end{proof}

\begin{example}\label{finiteexpo1}
Consider the process $z(t)=X_1 e^{i t \xi} + X_2 e^{i t \eta}$ where $\xi, \eta \in \R$, $\xi \neq \eta$ and $X_1$, $X_2$ are proper, independent zero-mean Gaussians, i.e. $\E X_1 \overline{X_2}=\E X_1 X_2=\E X_1^2 = \E X_2^2 = 0$. The process $z$ is proper, and WSS because the covariance function is $r_z(t,s)=\E|X_1|^2 e^{i \xi (t-s)} + \E|X_2|^2 e^{i \eta (t-s)}$. The real part is $\re r_z(t,s) = 2 r_x(t,s) = 2 \rho_x(t-s) = \E|X_1|^2 \cos(\xi (t-s)) + \E|X_2|^2 \cos(\eta (t-s))$. From Proposition \ref{wssprop1} we may conclude that $T=\emptyset$, that is, the IF process $\dot{\varphi}(t)$ has infinite variance for all $t \in \R$. So a linear combination of pure exponential functions, with independent proper Gaussian weights, has an infinite-variance IF process everywhere unless it consists of a single term, in which case it follows from Remark \ref{singlefunction1} that its IF process is deterministic (with variance zero for all time points).
\end{example}

Next we study the class of harmonizable processes which is larger than the class of mean-square continuous WSS processes. By the following three examples we show that $T=\R$ or $T=\emptyset$ may occur for harmonizable processes that are not WSS. We do not know whether $\emptyset \subsetneq T \subsetneq \R$ may occur.

\begin{example}
Let $x(t)$ and $y(t)$ be independent Gaussian real-valued processes with identical \emph{locally stationary} \cite{Flandrin1,Silverman1} covariance function
$$
r_x(t,s)=r_y(t,s) = \exp \left( -2 a \left( \frac{t+s}{2} \right)^2 - \frac{b}{2} (t-s)^2 \right),
$$
which is a covariance function provided $b \geq a \geq 0$ \cite{Silverman1}. We compute $\partial_1 \partial_2 r_x(t,t) = ((b-a) + 4 a^2 t^2 ) \exp(-2 a t^2)$ and $\partial_1 r_x(t,t) = - 2 a t \exp(- 2 a t^2)$ which gives
$$
r_x(t,t) \partial_1 \partial_2 r_x(t,t) - (\partial_1 r_x(t,t))^2 = ( b-a + 4 a^2 t^2 - 4 a^2 t^2 ) \exp(- 4 a t^2) = ( b-a ) \exp(- 4 a t^2) > 0, \quad t \in \R,
$$
if $b>a$. Thus by \eqref{Tdef} $T=\emptyset$ in this case.
\end{example}

\begin{example}
Let $g \in C^1(\R)$ be real-valued and the Fourier transform of a bounded measure, and let again $x(t)$ and $y(t)$ be independent Gaussian real-valued processes with equal covariance function $r_x(t,s) = r_y(t,s) = g(t) g(s)$. If $z(t)=x(t)+iy(t)$ then $r_z(t,s)=2 g(t) g(s)$. We have
$$
r_x(t,t) \partial_1 \partial_2 r_x(t,t) - (\partial_1 r_x(t,t))^2 = g(t)^2 \dot{g}(t)^2 - ( \dot{g}(t) g(t) )^2 = 0, \quad t \in \R,
$$
which means that $T=\R$ where $T$ is defined by  \eqref{Tdef}. In this example $|r_z(t,s)|^2 = r_z(t,t) r_z(s,s)$, i.e. we have equality in Cauchy--Schwarz for all $t,s \in \R$, which implies that there exists $s_0 \in \R$ such that   $z(t)=c(t)z(s_0)$ where $c(t)$ is a deterministic function. In fact $c(t)$ is a multiple of $g(t)$. Thus we are in the situation of Remark \ref{singlefunction1}.
\end{example}

\begin{example}\label{twononstat}
This example is a generalization of Example \ref{finiteexpo1}.
Let $z(t)=X_1 e^{i t \xi} + X_2 e^{i t \eta}$ where $\xi, \eta \in \R$, $\xi \neq \eta$ and $X_1$, $X_2$ are jointly proper, zero-mean, unit variance Gaussians, i.e. $\E|X_1|^2 = \E|X_2|^2 = 1$ and $\E X_1 X_2=\E X_1^2 = \E X_2^2 = 0$. In contrast to Example \ref{finiteexpo1} we allow a non-zero correlation between $X_1$ and $X_2$, $\E X_1 \overline{X_2} := c \neq 0$. We assume $|c|<1$ since otherwise we have equality $|\E X_1 \overline{X_2} |^2 = \E|X_1|^2 \E|X_2|^2$ in the Cauchy--Schwarz inequality, which implies that $X_2$ is a complex multiple of $X_1$, reducing the problem to that of Remark \ref{singlefunction1}.

The process $z$ is proper, harmonizable but not WSS because the covariance function is $r_z(t,s)=e^{i \xi (t-s)} + e^{i \eta (t-s)} + c e^{i(\xi t - \eta s)} + \overline{c} e^{i(\eta t - \xi s)}$. The terms in the decomposition $r_z=2 r_x + 2 i r_{yx}$ are
\begin{equation}\nonumber
\begin{aligned}
2 r_x(t,s) & = \cos(\xi (t-s)) + \cos(\eta (t-s)) \\
& + \re c \left( \cos( \xi t - \eta s) + \cos(\eta t-\xi s) \right)
+ \im c \left( \sin(\eta t-\xi s) - \sin(\xi t-\eta s) \right), \\
2 r_{yx}(t,s) & = \sin(\xi (t-s)) + \sin(\eta (t-s)) \\
& + \re c \left( \sin( \xi t - \eta s) + \sin(\eta t-\xi s) \right) + \im c \left( \cos(\xi t-\eta s) - \cos(\eta t-\xi s) \right).
\end{aligned}
\end{equation}
Straightforward computations give
\begin{equation}\nonumber
\begin{aligned}
2 r_x(t,t) & = 2( 1 + \re c \cos(t(\xi-\eta)) - \im c \sin(t(\xi-\eta)) ),  \\
2 \partial_1 \partial_2 r_x(t,t) & = \xi^2 + \eta^2 + 2 \xi \eta \left( \re c \cos(t(\xi-\eta)) - \im c \sin(t(\xi-\eta)) \right), \\
2 \partial_1 r_x(t,t) & = (\eta-\xi) \left( \re c \sin(t(\xi-\eta)) + \im c \cos(t(\xi-\eta)) \right), \\
2 \partial_1 r_{yx}(t,t) & = (\eta+\xi) \left( 1 + \re c \cos(t(\xi-\eta)) - \im c \sin(t(\xi-\eta)) \right),
\end{aligned}
\end{equation}
which in turn yields
\begin{equation}\nonumber
\begin{aligned}
4 \left(r_x(t,t) \partial_1 \partial_2 r_x(t,t) -
(\partial_1 r_{yx}(t,t))^2 - (\partial_1 r_{x}(t,t))^2 \right) = (\xi-\eta)^2 (1-|c|^2) > 0
\end{aligned}
\end{equation}
for all $t \in \R$. This means that $T=\emptyset$. We conclude that a linear combination of two pure exponential functions, with correlated proper Gaussian weights of equal power, is a nonstationary process whose IF process $\dot{\varphi}(t)$ has infinite variance everywhere.
\end{example}

\section{Conclusion}

We have derived the pdf of the IF of a mean-square differentiable proper Gaussian stochastic process at a fixed and arbitrary time point. The proof does not involve the frequency domain. It shows that the IF has either infinite or zero variance. Moreover, if we assume that the process is harmonizable then we obtain as a consequence of our main result the formula for the mean of the IF as a normalized first-order frequency moment of the Wigner spectrum, again for fixed time. This is a generalization of the corresponding well-known formula for deterministic functions.

The question of the dichotomy zero/infinite variance of the IF process is left somewhat open in this paper. We do not know if one process may exhibit nonempty time sets of both kinds. We have shown by examples that IF processes with constant zero variance exist, and IF processes with constant infinite variance exist, and each of these cases can occur both inside and outside the class of WSS processes.

\section*{Acknowledgement}

We express our gratitude to Bj\"orn R\"uffer for his helpful comments and remarks.

\end{document}